\newcommand{\Suff}{\textit{Suff}}
\newcommand{\Pref}{\textit{Pref}}
\newcommand{\Fact}{\textit{Fact}}
\newcommand{\St}{\textit{St}}
\newcommand{\LS}{\textit{LS}}
\newcommand{\RS}{\textit{RS}}
\newcommand{\BS}{\textit{BS}}
\newcommand{\SBS}{\textit{SBS}}
\newcommand{\WBS}{\textit{NBS}}
\newcommand{\MF}{\textit{MFSt}}
\renewcommand{\epsilon}{\varepsilon}
\begin{document}

\title{A Characterization of Bispecial \\ Sturmian Words}

\author{Gabriele Fici}

\authorrunning{G. Fici}

\institute{I3S, CNRS \& Universit\'e Nice Sophia Antipolis, France \\ \email{fici@i3s.unice.fr} }

\maketitle

\begin{abstract}
A finite Sturmian word $w$ over the alphabet $\{a,b\}$ is left special (resp.~right special)  if $aw$ and $bw$ (resp.~$wa$ and $wb$) are both Sturmian words. A bispecial Sturmian word is a Sturmian word that is both left and right special. We show as a main result that bispecial Sturmian words are exactly the maximal internal factors of Christoffel words, that are words coding the digital approximations of segments in the Euclidean plane. This result is an extension of the known relation between central words and primitive Christoffel words. Our characterization allows us to give an enumerative formula for bispecial Sturmian words. We also investigate the minimal forbidden words for the set of Sturmian words.
\end{abstract}

\keywords Sturmian words, Christoffel words, special factors, minimal forbidden words, enumerative formula.

\section{Introduction}\label{sec:intro}

Sturmian words are non-periodic infinite words of minimal factor complexity. They are characterized by the property of having exactly $n+1$ distinct factors of length $n$ for every $n\ge 0$ (and therefore are binary words) \cite{MoHe40}. As an immediate consequence of this property, one has that in any Sturmian word there is a unique factor for each length $n$ that can be extended to the right with both letters into a factor of length $n+1$. These factors are called \emph{right special factors}. Moreover, since any Sturmian word is recurrent (every factor appears infinitely often) there is a unique factor for each length $n$ that is left special, i.e., can be extended to the left with both letters into a factor of length $n+1$.

The set $\St$ of finite factors of Sturmian words coincides with the set of binary \emph{balanced} words, i.e., binary words having the property that any two factors of the same length have the same number of occurrences of each letter up to one. These words are also called (finite) Sturmian words and have been extensively studied because of their relevant role in several fields of theoretical computer science.

If one considers extendibility within the set $\St$, one can define \emph{left special Sturmian words} (resp.~\emph{right special Sturmian words}) \cite{DelMi94} as those words $w$ over the alphabet $\Sigma=\{a,b\}$ such that $aw$ and $bw$ (resp.~$wa$ and $wb$) are both Sturmian words. For example, the word $aab$ is left special since $aaab$ and $baab$ are both Sturmian words, but is not right special since $aabb$ is not a Sturmian word.

Left special Sturmian words are precisely the binary words having suffix automaton\footnote{The suffix automaton of a finite word $w$ is the minimal deterministic finite state automaton accepting the language of the suffixes of $w$.} with minimal state complexity (cf.~\cite{SciZa07,Fi10b}). From combinatorial considerations one has that right special Sturmian words are the reversals of left special Sturmian words.

The Sturmian words that are both left special and right special are called \emph{bispecial Sturmian words}. They are of two kinds: \emph{strictly bispecial Sturmian words}, that are the words $w$ such that $awa$, $awb$, $bwa$ and $bwb$ are all Sturmian words, or \emph{non-strictly bispecial Sturmian words} otherwise. Strictly bispecial Sturmian words have been deeply studied (see for example~\cite{DelMi94,CarDel05}) because they play a central role in the theory of Sturmian words. They are also called \emph{central words}. Non-strictly bispecial Sturmian words, instead, received less attention.

One important field in which Sturmian words arise naturally is discrete geometry. Indeed, Sturmian words can be viewed as digital approximations of straight lines in the Euclidean plane. It is known that given a point $(p,q)$ in the discrete plane $\mathbb{Z} \times \mathbb{Z}$, with $p,q>0$, there exists a unique path that approximates from below (resp.~from above) the segment joining the origin $(0,0)$ to the point $(p,q)$. This path, represented as a concatenation of horizontal and vertical unitary segments, is called the \emph{lower (resp.~upper) Christoffel word} associated to the pair $(p,q)$. If one encodes horizontal and vertical unitary segments with the letters $a$ and $b$ respectively, a lower (resp.~upper) Christoffel word is always a word of the form $awb$ (resp.~$bwa$), for some $w\in \Sigma^{*}$. If (and only if) $p$ and $q$ are coprime, the associated Christoffel word is primitive (that is, it is not the power of a shorter word). It is known that a word $w$ is a strictly bispecial Sturmian word if and only if $awb$ is a primitive lower Christoffel word (or, equivalently, if and only if $bwa$ is a primitive upper Christoffel word). As a main result of this paper, we show that this correspondence holds in general between bispecial Sturmian words and Christoffel words. That is, we prove (in Theorem \ref{theor:main}) that $w$ is a bispecial Sturmian word if and only if there exist letters $x,y$ in $\{a,b\}$ such that $xwy$ is a Christoffel word. 

This characterization allows us to prove an enumerative formula for bispecial Sturmian words (Corollary \ref{cor:formula}): there are exactly $2n+2-\phi(n+2)$ bispecial Sturmian words of length $n$, where $\phi$ is the Euler totient function, i.e., $\phi(n)$ is the number of positive integers smaller than or equal to $n$ and coprime with $n$. It is worth noticing that enumerative formulae for left special, right special and strictly bispecial Sturmian words were known \cite{DelMi94}, but to the best of our knowledge we exhibit the first proof of an enumerative formula for non-strictly bispecial (and therefore for bispecial) Sturmian words.

We then investigate minimal forbidden words for the set $\St$ of finite Sturmian words. The set of \emph{minimal forbidden words} of a factorial language $L$ is the set of words of minimal length that do not belong to $L$ \cite{MiReSci02}. Minimal forbidden words represent a powerful tool to investigate the structure of a factorial language (see~\cite{BeMiRe96}).
We give a characterization of minimal forbidden words for the set of Sturmian words in Theorem \ref{theor:mf}. We show that they are the words of the form $ywx$ such that $xwy$ is a non-primitive Christoffel word, where $\{x,y\}=\{a,b\}$. This characterization allows us to give an enumerative formula for the set of minimal forbidden words (Corollary \ref{cor:formulamf}):  there are exactly $2(n-1-\phi(n))$ minimal forbidden words of length $n$ for every $n>1$.

The paper is organized as follows. In Sec.\ \ref{sec:wsf} we recall standard definitions on words and factors. In Sec.\ \ref{sec:StCh} we deal with Sturmian words and Christoffel words, and present our main result. In Sec.\ \ref{sec:En} we give an enumerative formula for bispecial Sturmian words. Finally, in Sec.\ \ref{sec:MF}, we investigate minimal forbidden words for the language of finite Sturmian words. 

\section{Words and special factors}\label{sec:wsf}

Let $\Sigma$ be a finite alphabet, whose elements are called letters. A word over $\Sigma$ is a finite sequence of letters from $\Sigma$. A right-infinite word over $\Sigma$ is a non-ending sequence of letters from $\Sigma$. The set of all words over $\Sigma$ is denoted by $\Sigma^*$. The set of all words over $\Sigma$ having length $n$ is denoted by $\Sigma^n$. The empty word has length zero and is denoted by $\varepsilon$. For a subset $X$ of $\Sigma^{*}$, we note $X(n)=|X\cap \Sigma^{n}|$. Given a non-empty word $w$, we let $w[i]$ denote its $i$-th letter. The reversal of the word $w=w[1]w[2]\cdots w[n]$, with $w[i]\in\Sigma$ for $1\leq i\leq n$, is the word $\tilde{w}=w[n]w[n-1]\cdots w[1]$. We set $\tilde{\epsilon}=\epsilon$. A palindrome is a word $w$ such that $\tilde{w}=w$. A word is called a power if it is the concatenation of copies of another word; otherwise it is called primitive. For a letter $a\in \Sigma$, $|w|_{a}$ is the number of $a$'s appearing in $w$.  A word $w$ has period $p$, with $0<p\leq |w|$, if $w[i]=w[i+p]$ for every $i=1,\ldots ,|w|-p$. Since $|w|$ is always a period of $w$, every non-empty word has at least one period. %

A word $z$ is a factor of a word $w$ if $w=uzv$ for some $u,v\in \Sigma^{*}$. In the special case $u = \varepsilon $ (resp.~$v = \varepsilon $), we call $z$ a prefix (resp.~a suffix) of $w$. We let $\Pref(w)$, $\Suff(w)$ and $\Fact(w)$ denote, respectively, the set of prefixes, suffixes and factors of the word $w$. The factor complexity of a word $w$ is the integer function $f_{w}(n)=|\Fact(w)\cap \Sigma^n|$,  $n\geq 0$. 

A factor $u$ of a word $w$ is left special (resp.~right special) in $w$ if there exist $a,b\in \Sigma$, $a\neq b$, such that $au,bu\in \Fact(w)$ (resp.~$ua,ub\in \Fact(w)$). A factor $u$ of $w$ is bispecial in $w$ if it is both left and right special. In the case when $\Sigma=\{a,b\}$, a bispecial factor $u$ of $w$ is said to be strictly bispecial in $w$ if $aua,aub,bua,bub$ are all factors of $w$; otherwise $u$ is said to be non-strictly bispecial in $w$. For example, let $w= aababba$. The left special factors of $w$ are $\epsilon$, $a$, $ab$, $b$ and $ba$. The right special factors of $w$ are $\epsilon$, $a$, $ab$ and $b$. Therefore, the bispecial factors of $w$ are $\epsilon$, $a$, $ab$ and $b$. Among these, only $\epsilon$ is strictly bispecial.

In the rest of the paper we fix the alphabet $\Sigma=\{a,b\}$.

\section{Sturmian words and Christoffel words}\label{sec:StCh}

A right-infinite word $w$ is called a Sturmian word if $f_{w}(n)=n+1$ for every $n\ge 0$, that is, if $w$ contains exactly $n+1$ distinct factors of length $n$ for every $n\ge 0$. Sturmian words are non-periodic infinite words of minimal factor complexity \cite{CoHe73}.
A famous example of infinite Sturmian word is the Fibonacci word $$F=abaababaabaababaababaabaababaabaab\cdots$$ obtained as the limit of the substitution $a\mapsto ab$, $b\mapsto a$. 

A finite word is called Sturmian if it is a factor of an infinite Sturmian word. Finite Sturmian words are characterized by the following balance property \cite{DuGB90}: a finite word $w$ over $\Sigma=\{a,b\}$ is Sturmian if and only if for any $u,v\in \Fact(w)$ such that $|u|=|v|$ one has $||u|_{a}-|v|_{a}|\le 1$ (or, equivalently, $||u|_{b}-|v|_{b}|\le 1$). We let $\St$ denote the set of finite Sturmian words. The language $\St$ is factorial (if $w=uv\in \St$, then $u,v\in \St$) and extendible (for every $w\in \St$ there exist letters $x,y\in \Sigma$ such that $xwy\in \St$).

Let $w$ be a finite Sturmian word. The following definitions are in \cite{DelMi94}. 

\begin{definition}
A word  $w\in \Sigma^{*}$ is a  left special (resp.~right special) Sturmian word if $aw,bw\in \St$ (resp.~if $wa,wb\in \St$). A bispecial Sturmian word is a Sturmian word that is both left special and right special. Moreover, a bispecial Sturmian word is strictly bispecial if $awa,awb,bwa$ and $bwb$ are all Sturmian word; otherwise it is non-strictly bispecial. 
\end{definition}

We let $\LS$, $\RS$, $\BS$, $\SBS$ and $\WBS$ denote, respectively, the sets of left special, right special, bispecial, strictly bispecial and non-strictly bispecial Sturmian words. Hence, $\BS=\LS\cap \RS=\SBS\cup \WBS$.

The following lemma is a reformulation of a result of de Luca \cite{Del97}.

\begin{lemma}\label{lem:prefsuf}
Let $w$ be a word over $\Sigma$. Then $w\in \LS$ (resp.~$w\in \RS$) if and only if $w$ is a prefix (resp.~a suffix) of a word in $\SBS$.
\end{lemma}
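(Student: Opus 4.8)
The plan is to establish the statement for $\LS$, from which the statement for $\RS$ follows by reversal. Every central word is a palindrome, and $\St$ is closed under reversal since the balance property is preserved when reading a word backwards; hence $\SBS=\widetilde{\SBS}$ and $w\in\RS$ iff $\tilde w\in\LS$. Moreover $w$ is a suffix of some $s\in\SBS$ iff $\tilde w$ is a prefix of $\tilde s=s\in\SBS$. Combining these two equivalences, the assertion ``$w\in\RS$ iff $w\in\Suff(\SBS)$'' is exactly the assertion ``$\tilde w\in\LS$ iff $\tilde w\in\Pref(\SBS)$''. It therefore suffices to prove that $w\in\LS$ if and only if $w$ is a prefix of a word in $\SBS$.

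For the ``if'' direction, suppose $w$ is a prefix of some $s\in\SBS$. By definition $asa,asb,bsa,bsb\in\St$, so in particular $as,bs\in\St$. Since $w$ is a prefix of $s$, the word $aw$ (resp.~$bw$) is a prefix of $as$ (resp.~$bs$); as $\St$ is factorial, $aw,bw\in\St$, whence $w\in\LS$. This settles $\Pref(\SBS)\subseteq\LS$.

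The substantial direction is ``only if''. Let $w\in\LS$. If $w$ is a power of a single letter then $w$ is itself central and there is nothing to prove, so assume $w$ contains both letters. The natural candidate for a central extension is the right palindromic closure $w^{(+)}$, the shortest palindrome having $w$ as a prefix: writing $w=vu$ with $u$ the longest palindromic suffix of $w$, one has $w^{(+)}=vu\tilde v$, which is a palindrome with $w$ as a prefix. The crux of the argument, and the step I expect to be the main obstacle, is to show that $w^{(+)}\in\SBS$. The idea is to exploit that $aw$ and $bw$ are both balanced: this forces a strong periodicity on $w$, and hence on $w^{(+)}$, which I would convert, via Fine and Wilf's theorem, into a pair of coprime periods $p,q$ of $w^{(+)}$ satisfying $p+q=|w^{(+)}|+2$. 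By de Luca's characterization of central words as the palindromes admitting such a pair of coprime periods \cite{Del97}, this gives $w^{(+)}\in\SBS$ and completes the proof. The delicate points are to read off the two periods correctly from the balance of $aw$ and $bw$, and to control how the longest palindromic suffix $u$ interacts with them so as to secure both the coprimality of $p,q$ and the length relation $p+q=|w^{(+)}|+2$; this interaction is where the argument requires the most care.
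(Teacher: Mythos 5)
The paper gives no proof of this lemma; it is stated as a reformulation of a result of de Luca \cite{Del97}, so there is no in-paper argument to compare against line by line. Judged on its own, your attempt is sound in its easy parts but has a genuine gap at its core. The reduction of the $\RS$ case to the $\LS$ case via reversal is correct (using that $\St$ is closed under reversal and that words in $\SBS$ are palindromes), and the ``if'' direction is correct and complete. The problem is the ``only if'' direction: you identify the right candidate ($w^{(+)}$, the right palindromic closure) and the right target (show $w^{(+)}\in\SBS$), but you do not prove it. The passage ``this forces a strong periodicity on $w$\ldots which I would convert, via Fine and Wilf's theorem, into a pair of coprime periods $p,q$ of $w^{(+)}$ satisfying $p+q=|w^{(+)}|+2$'' is a statement of intent, not an argument: you never say which two periods the balance of $aw$ and $bw$ produces, why they are coprime, or why they satisfy the required length relation. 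That claim --- that the palindromic closure of a left special Sturmian word is central --- is essentially the entire content of de Luca's theorem being cited, so deferring it leaves the lemma unproven.

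Two further remarks. First, even granting that balance yields some periodicity, Fine and Wilf pushes in the wrong direction here: it lets you \emph{merge} two periods whose sum is small relative to the length, whereas a central word needs two coprime periods whose sum \emph{exceeds} the length by $2$ (i.e.\ Fine and Wilf must \emph{fail to apply}); your sketch does not indicate how you would manufacture such a pair. Second, a route more consonant with the paper's own toolkit would be to show that $w^{(+)}$ is again a left special Sturmian word: since $w^{(+)}$ is a palindrome, it would then also be right special, hence bispecial, hence strictly bispecial by Proposition~\ref{prop:sturmstrispe}. But that too requires proving that $w^{(+)}\in\St$ and $w^{(+)}\in\LS$ whenever $w\in\LS$, which is again the non-trivial heart of de Luca's result and would need an actual argument (e.g.\ via standard words and iterated palindromic closure).
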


Given a bispecial Sturmian word, the simplest criterion to determine if it is strictly or non-strictly bispecial  is provided by the following nice characterization \cite{DelMi94}:

\begin{proposition}\label{prop:sturmstrispe}
 A bispecial Sturmian word is strictly bispecial if and only if it is a palindrome.
\end{proposition}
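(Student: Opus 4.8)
The plan is to work throughout with the balance characterization of $\St$ together with its invariance under reversal (if $z\in\St$ then $\tilde z\in\St$, since reversal preserves the length and the letter-counts of every factor). The one extra tool I would record first is the standard sharpening of the balance property: a binary word $z$ lies in $\St$ if and only if there is no palindrome $s$ with both $asa$ and $bsb$ in $\Fact(z)$ (the case $s=\epsilon$ being the pair $aa,bb$). Now let $w$ be bispecial, so that $aw,bw,wa,wb$ are already in $\St$. Writing $\Fact(xwy)=\Fact(xw)\cup\Fact(wy)\cup\{xwy\}$, a forbidden pair $asa,bsb$ inside $xwy$ cannot lie entirely in $\Fact(xw)$ nor entirely in $\Fact(wy)$ (both are balanced), and cannot involve the full word $xwy$ (two distinct factors of maximal length $|w|+2$ cannot coexist). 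A short case check then shows that $awa$ and $bwb$ are \emph{always} in $\St$: the only new palindromic factors produced by the outer $a$'s (resp.\ $b$'s) are of the form $asa$ (resp.\ $bsb$), so any completing member of the opposite type must already be a factor of $w$, forcing both members of the pair into a single balanced word $aw,bw,wa$ or $wb$ — a contradiction. Hence $w$ is strictly bispecial if and only if $awb,bwa\in\St$.

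Tracking where a forbidden pair can sit makes this concrete: $awb\notin\St$ is equivalent to the existence of a palindrome $s$ which is a \emph{border} of $w$ (a common prefix and suffix) such that the letter following the prefix occurrence is $a$ and the letter preceding the suffix occurrence is $b$; symmetrically for $bwa$ with the letters $a,b$ exchanged. Calling these two letters the \emph{extension letters} of the border $s$, we obtain the clean reformulation: \emph{$w$ is strictly bispecial if and only if no palindromic border of $w$ has distinct extension letters.} For the direction $(\Leftarrow)$ this is now immediate. If $w$ is a palindrome, then for every border $s$ the two extension letters are $w[\,|s|+1\,]$ and $w[\,|w|-|s|\,]$, which coincide because $w[k]=w[\,|w|+1-k\,]$ for all $k$. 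Thus no border witnesses a difference, and $w$ is strictly bispecial.

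For the converse I would argue contrapositively. Let $w$ be bispecial but not a palindrome, and let $i$ be the least index with $w[i]\neq w[\,|w|+1-i\,]$, say $w[i]=a$ and $w[\,|w|+1-i\,]=b$. By minimality the length-$(i-1)$ prefix $p=w[1]\cdots w[i-1]$ is the reverse of the length-$(i-1)$ suffix, so $w$ begins with $pa$ and ends with $b\tilde p$. If $p$ is a palindrome it is a palindromic border with extension letters $a$ and $b$, which exhibits $awb\notin\St$ and completes the proof. The crux is therefore to show that bispeciality forces $p$ to be a palindrome — and this is exactly where the \emph{full} hypothesis is indispensable, not merely balance of $w$: the balanced word $ababba$ is a non-palindrome every one of whose palindromic borders ($\epsilon$ and $a$) has equal extension letters, yet it is not bispecial, since both $aababba$ and $ababbaa$ are unbalanced.

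I expect this last step to be the main obstacle, and I would attack it in one of two ways. The first is an induction on $|w|$, peeling off the recursive (continued-fraction / Rauzy) structure of bispecial Sturmian words so that the palindromicity of $p$ follows from the palindromicity of a strictly shorter bispecial word. The second is the classical route: use $awb,bwa\in\St$ to produce two coprime periods $p',q'$ of $w$ with $p'+q'=|w|+2$, placing $w$ just below the Fine and Wilf threshold, and then invoke de Luca's characterization of words with such a period structure (the central words) as palindromes. Either route, once it certifies that the first-mismatch prefix $p$ is a palindrome, closes the equivalence via the border reformulation established above.
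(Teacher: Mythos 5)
First, a framing point: the paper does not prove this proposition at all — it is quoted directly from de Luca and Mignosi \cite{DelMi94} — so there is no internal proof to compare against, and your attempt must be judged on its own terms. It splits cleanly into a part that is complete and correct and a part that is not. The reduction is sound: the unbalanced-pair lemma ($z\notin\St$ iff some palindrome $s$ has $asa,bsb\in\Fact(z)$) is the right tool; the verification that $awa$ and $bwb$ are \emph{always} Sturmian for bispecial $w$ is correct (any factor of $awa$ outside $\Fact(aw)\cup\Fact(wa)$ is $awa$ itself, which cannot be half of a forbidden pair); and the reformulation ``$w$ is strictly bispecial iff no palindromic border of $w$ has distinct extension letters'' follows. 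The implication \emph{palindrome $\Rightarrow$ strictly bispecial} is then genuinely and completely proved, and this is the half that does not drop out of any other result quoted in the paper. Your counterexample $ababba$ correctly locates where bispecialness (and not mere balance) must enter.

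The gap is in the converse. Having reduced \emph{strictly bispecial $\Rightarrow$ palindrome} to the claim that the prefix $p$ preceding the first mismatch of $w$ with $\tilde w$ is a palindrome, you stop: neither proposed attack is carried out, and this claim is not a routine verification — it is essentially the whole theorem. The first route (induction on the continued-fraction/Rauzy structure of bispecial Sturmian words) presupposes exactly the structural description $w=(uyx)^nu$ with $u$ central, i.e.\ Theorem \ref{theor:main}, whose proof in this paper \emph{uses} the present proposition; that way lies circularity. The second route, unpacked, is ``$awa,awb,bwa,bwb\in\St$ forces two coprime periods of $w$ summing to $|w|+2$,'' which is precisely the de Luca--Mignosi theorem that $\SBS$ equals the set of central words (Proposition \ref{prop:sbscen}); producing those two periods is the entire difficulty, and you do not produce them. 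Note, moreover, that if you are willing to invoke Proposition \ref{prop:sbscen} as a black box, the forward direction is a two-line corollary — $w$ central gives $w=PxyQ=QyxP$ with $P,Q$ palindromes by Proposition \ref{prop:charcentral}, whence $\tilde w=\tilde Qyx\tilde P=QyxP=w$ — and the whole first-mismatch apparatus becomes dead weight for that half. So: the backward implication is proved; the forward implication either lacks its key step or collapses into a citation of a stronger known theorem.
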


Using the results in \cite{DelMi94}, one can derive the following classification of Sturmian words with respect to their extendibility.

\begin{proposition}\label{prop:bisp}
 Let $w$ be a Sturmian word. Then:
 
\begin{itemize}
\item  $|\Sigma w\Sigma\cap \St|=4$ if and only if $w$ is strictly bispecial;
\item  $|\Sigma w\Sigma\cap \St|=3$ if and only if $w$ is non-strictly bispecial;
\item  $|\Sigma w\Sigma\cap \St|=2$ if and only if $w$ is left special or right special but not bispecial;
\item  $|\Sigma w\Sigma\cap \St|=1$ if and only if $w$ is neither left special nor right special.
\end{itemize}
\end{proposition}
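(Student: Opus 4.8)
The plan is to exploit the fact that both sides of the four equivalences define \emph{partitions} of the set of Sturmian words. On the left, $|\Sigma w\Sigma\cap\St|\in\{1,2,3,4\}$, the lower bound coming from extendibility of $\St$; on the right, every Sturmian word is of exactly one of the four listed types. Hence it suffices to prove the four forward implications (type $\Rightarrow$ value): the converses then follow automatically, since a given $w$ lies in exactly one right-hand class, say the $k$-th, which forces value $k$, and as the values also partition, $w$ cannot attain value $k$ under any other class, so value $k$ in turn forces the $k$-th class. Throughout I set $N=|\Sigma w\Sigma\cap\St|$ and write $L=\{x\in\Sigma:xw\in\St\}$ and $R=\{y\in\Sigma:wy\in\St\}$, so that $w\in\LS$ iff $|L|=2$ and $w\in\RS$ iff $|R|=2$. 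Since $\St$ is factorial, $xwy\in\St$ forces $xw,wy\in\St$, whence $\Sigma w\Sigma\cap\St\subseteq\{xwy:x\in L,\ y\in R\}$ and $N\le|L|\,|R|$. Moreover, applying extendibility to each $xw$ with $x\in L$ (resp.\ each $wy$ with $y\in R$) shows that $x$ (resp.\ $y$) occurs in at least one element of $\Sigma w\Sigma\cap\St$; in the $|L|\times|R|$ extension grid this means that every row and every column is nonempty.

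The three easy implications follow at once. If $w$ is strictly bispecial then $N=4$ by definition. If $w$ is neither left nor right special then $|L|=|R|=1$, so $1\le N\le|L|\,|R|=1$ and $N=1$. If $w$ is left special but not right special (the symmetric case being identical) then $|L|=2$ and $|R|=1$; the grid has two rows and one column, and since every row is nonempty both entries are present, giving $N=2$.

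The only real work is the remaining implication: a non-strictly bispecial word satisfies $N=3$. Here $|L|=|R|=2$, so the extension grid is $2\times2$ with every row and column nonempty; thus $N\in\{2,3,4\}$, and $N=2$ occurs exactly when the two present extensions form a diagonal, i.e.\ either $awa,bwb\in\St$ with $awb,bwa\notin\St$, or $awb,bwa\in\St$ with $awa,bwb\notin\St$. By definition a non-strictly bispecial word has $N\neq4$, so everything reduces to excluding these two diagonal configurations, and this exclusion is the heart of the proof.

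To rule out the first diagonal I would argue by contradiction using the classical characterization of imbalance: a binary word fails to lie in $\St$ if and only if it contains $apa$ and $bpb$ for some palindrome $p$. Assume $awa,bwb\in\St$ but $awb,bwa\notin\St$. Since $\Fact(awb)=\Fact(aw)\cup\Fact(wb)\cup\{awb\}$ and $awb$ is of neither form $apa$ nor $bpb$, the witness for $awb\notin\St$ must straddle the two sides: there is a palindrome $p$ with $apa\in\Fact(aw)$ and $bpb\in\Suff(wb)$, so that $bp\in\Suff(w)$. Symmetrically, $bwa\notin\St$ yields a palindrome $q$ with $bqb\in\Fact(bw)$ and $aq\in\Suff(w)$. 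Now $bp$ and $aq$ are suffixes of $w$ beginning with different letters, hence comparable and of different lengths; assuming $|p|>|q|$ (the opposite case being symmetric, exchanging the roles of $a,b$ and of the balanced words $awa,bwb$), $aq$ is a suffix of the palindrome $p$, so $p$ begins with $qa$ and $apa$ begins and ends with $aqa$. A short analysis of the occurrence of $apa$ inside $aw$ then forces $aqa\in\Fact(w)$, and since also $bqb\in\Fact(bw)\subseteq\Fact(bwb)$, the word $bwb$ would contain both $aqa$ and $bqb$ with $q$ a palindrome, contradicting $bwb\in\St$. The main obstacle is precisely this palindrome bookkeeping: correctly locating the imbalance witnesses $p$ and $q$ as suffixes of $w$, and then showing that the shorter one reproduces a forbidden pair inside one of the balanced words $awa$, $bwb$.
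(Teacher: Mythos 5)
The paper itself offers no proof of this proposition (it is stated as derivable from the results of de Luca and Mignosi), so your argument must stand on its own. Its architecture is correct: both sides of the four equivalences partition the Sturmian words, so the forward implications suffice; the cases of value $4$, $2$ and $1$ follow correctly from the extension-grid argument; and for a non-strictly bispecial $w$ the problem correctly reduces to excluding the two diagonal configurations of the $2\times2$ grid. Your exclusion of the first diagonal ($awa,bwb\in\St$, $awb,bwa\notin\St$) is sound and completable: since $bpb$ cannot be a prefix of $aw$, balance of $aw$ and $wb$ forces $apa\in\Pref(aw)$ and $bp\in\Suff(w)$, symmetrically $qb\in\Pref(w)$ and $aq\in\Suff(w)$, and when $|p|>|q|$ the prefix $pa$ of $w$ ends with $aqa$ (because $p$ is a palindrome ending with $aq$), so $bwb$ contains both $aqa$ and $bqb\in\Fact(bw)$, a contradiction. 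Note, though, that your key tool --- a binary word is unbalanced if and only if it contains $apa$ and $bpb$ for some palindrome $p$ --- is nowhere in this paper and would need a citation (it is classical, e.g.\ Lothaire).

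The genuine gap is the second diagonal: the configuration $awb,bwa\in\St$ with $awa,bwb\notin\St$ is explicitly listed as one of the two cases to be excluded and is then never addressed; the proposal simply stops after the first diagonal, so the implication ``non-strictly bispecial $\Rightarrow$ value $3$'' is not established. The case is in fact the easier of the two with the same tool. If $awa\notin\St$, take a palindrome $p$ with $apa,bpb\in\Fact(awa)$. An occurrence of $bpb$ in $awa$ can use neither the first nor the last letter (both are $a$), so $bpb\in\Fact(w)\subseteq\Fact(aw)\cap\Fact(wa)$; since $aw$ and $wa$ are Sturmian (factors of $awb$ and $bwa$), balance gives $apa\notin\Fact(aw)\cup\Fact(wa)$, so the only remaining occurrence of $apa$ in $awa$ is $apa=awa$ itself, whence $bpb=bwb$ would have to be a factor of $awa$ of the same length, which is absurd. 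Alternatively, you could invoke the fact (used by the paper in the proof of Theorem~\ref{theor:mf}) that $xwy,ywx\in\St$ forces $w$ to be central, hence strictly bispecial, contradicting $w\in\WBS$. Either way, this missing case must be supplied before the proof is complete.
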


\begin{example}
The word $w=aba$ is a strictly bispecial Sturmian word, since $awa$, $awb$, $bwa$ and $bwb$ are all Sturmian words, so that $|\Sigma w\Sigma\cap \St|=4$. The word $w=abab$ is a bispecial Sturmian word since $wa$, $wb$, $aw$ and $bw$ are Sturmian words. Nevertheless, $awb$ is not Sturmian, since it contains $aa$ and $bb$ as factors. So $w$ is a non-strictly bispecial Sturmian word, and $|\Sigma w\Sigma\cap \St|=3$. The Sturmian word $w=aab$ is left special but not right special, and $|\Sigma w\Sigma\cap \St|=2$. Finally, the Sturmian word $w=baab$ is neither left special nor right special, the only word in $\Sigma w\Sigma\cap \St$ being $awa$.
\end{example}

We now recall the definition of central word \cite{DelMi94}.

\begin{definition}
A word over $\Sigma$ is central if it has two coprime periods $p$ and $q$ and length equal to $p+q-2$.
\end{definition}

A combinatorial characterization of central words is the following (see~\cite{Del97}):

\begin{proposition}\label{prop:charcentral}
A word $w$ over $\Sigma$ is central if and only if $w$ is the power of a single letter or there exist palindromes $P,Q$ such that $w=PxyQ=QyxP$, for different letters $x,y\in \Sigma$. Moreover, if $|P|<|Q|$, then $Q$ is the longest palindromic suffix of $w$.
\end{proposition}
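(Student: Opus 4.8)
The plan is to prove the two implications separately, using Fine and Wilf's periodicity theorem as the main engine: if a word of length at least $r+s-\gcd(r,s)$ admits the two periods $r$ and $s$, then it also admits the period $\gcd(r,s)$. The crucial observation is that a central word lies \emph{exactly one letter below} this threshold, since $|w|=p+q-2=(p+q-\gcd(p,q))-1$ when $\gcd(p,q)=1$; the whole statement is essentially an analysis of this extremal, barely non-periodic situation. For the direction ($\Leftarrow$) I would first note that the powers of a single letter are exactly the central words of least period $1$ (taking the improper period $|w|+1$ as the coprime companion). Otherwise, suppose $w=PxyQ=QyxP$ with $P,Q$ palindromes and $x\ne y$. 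Reading $Q$ as a prefix (from $QyxP$) and as a suffix (from $PxyQ$) exhibits $Q$ as a border of $w$, whence $w$ has period $|w|-|Q|=|P|+2$; symmetrically $P$ is a border and yields the period $|Q|+2$. Putting $p=|P|+2$ and $q=|Q|+2$ gives $|w|=p+q-2$, so the only remaining point is coprimality. If $d=\gcd(p,q)\ge 2$, then Fine and Wilf forces the period $d$; since $p\equiv q\equiv 0\pmod d$, the positions $p-1$ and $q-1$ are congruent modulo $d$ and hence carry the same letter, but these letters are exactly $x=w[p-1]$ and $y=w[q-1]$, contradicting $x\ne y$. Thus $p,q$ are coprime and $w$ is central.

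For the direction ($\Rightarrow$), let $w$ be central with coprime periods $p\le q$ and $|w|=p+q-2$. If $p=1$ then $w$ has period $1$ and is a power of a single letter. Otherwise $2\le p<q\le|w|$, and I would let $Q$ be the border of length $|w|-p=q-2$ coming from the period $p$, and $P$ the border of length $|w|-q=p-2$ coming from the period $q$. Since $|P|+|Q|=|w|-2$, setting $x=w[p-1]$ and $y=w[p]$ gives at once $w=P\,xy\,Q$. Granting that $w$ is a palindrome and that $x\ne y$ (treated below), $P$ and $Q$ are themselves palindromes — a border $B$ of a palindrome satisfies $B=\tilde B$, because its suffix occurrence is the reversal of its prefix occurrence — and then $w=\tilde w=\widetilde{PxyQ}=Q\,yx\,P$, which completes the factorization.

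The hard part, and the technical core of the proposition, is exactly the claim that a non-constant central word is a palindrome with $x\ne y$. I would model the two periods by the graph $G$ on $\{1,\dots,|w|\}$ with edges $\{i,i+p\}$ and $\{i,i+q\}$, on whose connected components $w$ must be constant. The extremal Fine and Wilf count shows that $G$ has exactly $\gcd(p,q)+1=2$ components; since $w$ is not a power of a single letter it uses both letters, one per component, and one checks that the two central positions $p-1$ and $p$ fall into different components, giving $x\ne y$. For palindromicity, the reversal $i\mapsto|w|+1-i$ is an automorphism of $G$ (it sends $p$-edges to $p$-edges and $q$-edges to $q$-edges), so it permutes the two components; the point is to rule out that it \emph{swaps} them. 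This is immediate when $|w|$ is odd, since the central position is fixed, and for $|w|$ even I would argue that the two components have different cardinalities, so a swap is impossible; it then follows that $w[i]=w[|w|+1-i]$ for all $i$, i.e.\ $w=\tilde w$. Pinning down this two-component description cleanly is where I expect the real work to lie; an alternative route is a descent that peels off the shorter period and recognises the prefix of length $q-2$ as a shorter central word, proving the claim by induction on $|w|$.

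Finally, for the ``moreover'' clause, I would assume $|P|<|Q|$, i.e.\ $p<q$, and first check that $p$ is the \emph{least} period of $w$: any period $p'<p$, combined with $q$ through Fine and Wilf, would force the period $1$ and hence make $w$ constant. Consequently $w$ has no border longer than $|w|-p=q-2=|Q|$. Since every palindromic suffix of the palindrome $w$ other than $w$ itself is a border, and $Q$ is a palindromic suffix of length $q-2$, it follows that $Q$ is the longest proper palindromic suffix of $w$, as claimed.
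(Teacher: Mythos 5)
The paper does not actually prove this proposition: it is quoted from de Luca \cite{Del97} (where the argument goes through standard words and palindromic closures), so there is no in-paper proof to compare against. Your route via Fine and Wilf and the periodicity graph is a legitimate and standard alternative, and your ($\Leftarrow$) direction is essentially complete: the border/period bookkeeping is right, and the coprimality argument (positions $p-1$ and $q-1$ are congruent modulo $d$ yet carry $x\neq y$) is correct.

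The genuine gap is exactly where you say you expect it: in the ($\Rightarrow$) direction, everything reduces to the claim that for coprime $p<q$ and $|w|=p+q-2$ the graph $G$ has precisely two components, and this is not a routine verification. The edge count gives $n-2$ edges, hence \emph{at least} two components; the hard half is that $G$ has \emph{at most} two, i.e.\ that $G$ is acyclic, which needs an argument (e.g.\ a closed walk contributes a relation $ap+bq=0$, and coprimality forces either $|a|\ge q$ $p$-edges --- more than the $q-2$ available --- or $a=b=0$, a case that must be excluded separately by tracking the maximal position visited). Without this, neither $x\neq y$ nor the palindromicity of $w$ is established, and these are the substance of the proposition; the claim that in the even-length case the two components have different cardinalities is likewise asserted but not derived. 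Two smaller points: in the ``moreover'' clause, Fine and Wilf applied to a period $p'<p$ together with $q$ yields period $\gcd(p',q)$, which need not be $1$; you need one more round (combine $\gcd(p',q)$ with $p$ and use $\gcd(p,q)=1$) to reach period $1$. And under the paper's definition a period must satisfy $p\le|w|$, so your treatment of $a^n$ via the ``improper period'' $|w|+1$ silently uses the more liberal convention of \cite{Del97}; you flag this, but it should be reconciled with the definition in force.
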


Actually, in the statement of Proposition \ref{prop:charcentral}, the requirement that the words $P$ and $Q$  are palindromes is not even necessary \cite{CarDel05}.

We have the following remarkable result \cite{DelMi94}:

\begin{proposition}\label{prop:sbscen}
A word over $\Sigma$ is a strictly bispecial Sturmian word if and only if it is a central word.
\end{proposition}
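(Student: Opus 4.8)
The plan is to prove the two implications separately, using as the main engine the classical characterization of balance by forbidden patterns: a word $v$ over $\Sigma$ fails to be Sturmian if and only if there is a word $t$ (possibly empty) such that both $ata$ and $btb$ occur as factors of $v$, and among such $t$ a shortest one is a palindrome. I would also use the theorem of Fine and Wilf: a word having two periods $p$ and $q$ and length at least $p+q-\gcd(p,q)$ has period $\gcd(p,q)$. The crucial feature of a central word $w$, with coprime periods $p,q$ and $|w|=p+q-2$, is that it sits exactly one letter below the Fine--Wilf threshold $p+q-1$, so the two periods need not collapse to period $1$; I expect the whole proof to turn on exploiting this extremal length.

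For the implication central $\Rightarrow$ strictly bispecial Sturmian, first note that a central word is a palindrome: if $w=a^{n}$ or $w=b^{n}$ this is immediate, and otherwise $w=PxyQ=QyxP$ with $P,Q$ palindromes by Prop \ref{prop:charcentral}, whence $\tilde w=\tilde Q\,y\,x\,\tilde P=QyxP=w$. To prove that $awa,awb,bwa,bwb$ are all Sturmian I would argue by contradiction: if one of them were unbalanced, the forbidden-pattern lemma would produce factors $ata$ and $btb$ inside a word of length $|w|+2=p+q$. Since $w$ already carries the periods $p$ and $q$, the two occurrences $ata,btb$ impose a further periodicity of length $|t|+1$ in conflicting phases, and combining these with $p,q$ on a block long enough to meet the Fine--Wilf bound would force period $1$, i.e. $w$ a power of a single letter, contradicting the assumption. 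The case $w=a^{n}$ (and symmetrically $w=b^{n}$) is checked directly, since $a^{n+2},a^{n+1}b,ba^{n+1},ba^{n}b$ are plainly balanced.

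For the converse, strictly bispecial $\Rightarrow$ central, by Prop \ref{prop:sturmstrispe} the word $w$ is a palindrome, so (using that reversal preserves balance and $\widetilde{awb}=bwa$) $awb$ is Sturmian if and only if $bwa$ is. If $w$ is a power of a single letter it is central, so assume $w$ contains both letters and let $p$ be its minimal period. The heart of the argument is to produce the complementary period $q:=|w|+2-p$ and to show $\gcd(p,q)=1$. I would obtain $q$ as a period by combining the palindromic structure of $w$ with the fact that, $w$ being right special, both $wa$ and $wb$ extend $w$ inside $\St$: the non-uniqueness of the Sturmian right extension is exactly what prevents the period $p$ from propagating and instead forces the second, complementary period $q$. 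Coprimality then follows from minimality of $p$ together with Fine--Wilf, since a common divisor of $p$ and $q$ would push the length above the threshold and collapse $w$ to a power of a single letter, which it is not.

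The main obstacle, in both directions, is the quantitative Fine--Wilf bookkeeping that ties the extremal length $|w|=p+q-2$ to strict bispeciality: showing in the forward direction that the forbidden pattern $ata,btb$ can be taken long enough to trigger Fine--Wilf precisely because the ambient word has length $p+q$, and showing in the converse direction that $q=|w|+2-p$ is genuinely a period coprime to $p$. An alternative to the period computation in the converse would be to run an induction on $|w|$ through the recursive decomposition $w=PxyQ=QyxP$ of Prop \ref{prop:charcentral}, reducing the extendibility of $w$ to that of the shorter central words $P$ and $Q$; the delicate point there is the gluing step, i.e. transferring balance of the four extensions of $P$ and $Q$ to the four extensions of $w$ across the overlap.
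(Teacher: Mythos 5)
The paper does not prove Proposition \ref{prop:sbscen}: it is quoted as a known result of de Luca and Mignosi \cite{DelMi94}, so there is no in-paper argument to compare yours with, and I can only judge the sketch on its own terms. As a plan it points at reasonable tools (the $ata$/$btb$ characterization of unbalancedness, Fine and Wilf, palindromicity via Proposition \ref{prop:charcentral}), but the two steps that carry the whole weight of the theorem are exactly the ones you defer as ``obstacles''. In the forward direction you never actually set up the Fine--Wilf application: you must exhibit a common block on which the period $|t|+1$ coming from the occurrences of $ata$ and $btb$ coexists with $p$ or $q$, and show that this block is long enough to meet the threshold. The factors $ata$ and $btb$ may be short and sit far apart inside $xwy$, and they do not by themselves induce a period of the whole word, so ``combining these with $p,q$'' is not yet an argument. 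In the converse direction, producing the complementary period $q=|w|+2-p$ is the entire content of the implication, and you only assert that the non-uniqueness of the Sturmian right extension ``forces'' it.

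Moreover, your coprimality step is false as stated. If $d=\gcd(p,q)\ge 2$ then $|w|=p+q-2\ge p+q-d$, so Fine and Wilf yields that $w$ has period $d$ --- not period $1$ --- and a word of period $d\ge 2$ need not be a power of a single letter. The word $w=abab$ is the standard counterexample: it has periods $p=2$ and $q=4=|w|+2-p$, satisfies $|w|=p+q-2$, is not a power of a letter, and is bispecial but \emph{not} strictly bispecial; excluding it requires re-invoking strict bispeciality (for instance, that $awb$ must be balanced), which your period arithmetic alone cannot do. So the proposal is a research plan rather than a proof. To complete it you would either have to carry out the Fine--Wilf bookkeeping you postpone, or switch to the route actually taken in \cite{DelMi94} --- essentially the alternative you mention in your last sentence: induct on the recursive decomposition $w=PxyQ=QyxP$ of Proposition \ref{prop:charcentral} and prove the gluing step for the four extensions.
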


Another class of finite words, strictly related to the previous ones, is that of Christoffel words.

\begin{definition}
Let $n>1$ and $p,q>0$ be integers such that $p+q=n$. The lower Christoffel word $w_{p,q}$ is the word defined for $1\le i\le n$ by
\[w_{p,q}[i] = \left\{ \begin{array}{lllll}
a & \mbox{if $iq$ mod$(n)>(i-1)q$ mod$(n)$,}\\
b & \mbox{if $iq$ mod$(n)<(i-1)q$ mod$(n)$.}\\
\end{array} \right.\]
\end{definition}

\begin{example}
 Let $p=6$ and $q=4$. We have $\{i4$ mod$(10)\mid i=0,1,\ldots,10\}=\{0,4,8,2,6,0,4,8,2,6,0\}$. Hence, $w_{6,4}=aababaabab$.
\end{example}

Notice that for every $n>1$, there are exactly $n-1$ lower Christoffel words $w_{p,q}$, corresponding to the $n-1$ pairs $(p,q)$ such that $p,q>0$ and $p+q=n$. 

\begin{remark}
In the literature, Christoffel words are often defined with the additional requirement that $\gcd(p,q)=1$ (cf.~\cite{Book08}). We call such Christoffel words primitive, since a Christoffel word is a primitive word if and only if $\gcd(p,q)=1$.
\end{remark}

If one draws a word in the discrete grid $\mathbb{Z} \times \mathbb{Z}$ by encoding each $a$ with a horizontal unitary segment and each $b$ with a vertical unitary segment, the lower Christoffel word $w_{p,q}$ is in fact the best grid approximation from below of the segment joining $(0,0)$ to $(p,q)$, and has slope $q/p$, that is, $|w|_{a}=p$ and  $|w|_{b}=q$ (see~\figurename~\ref{fig:GC}).

\begin{figure}
\begin{center}
\begin{minipage}{5.7cm}
\includegraphics[height=40mm]{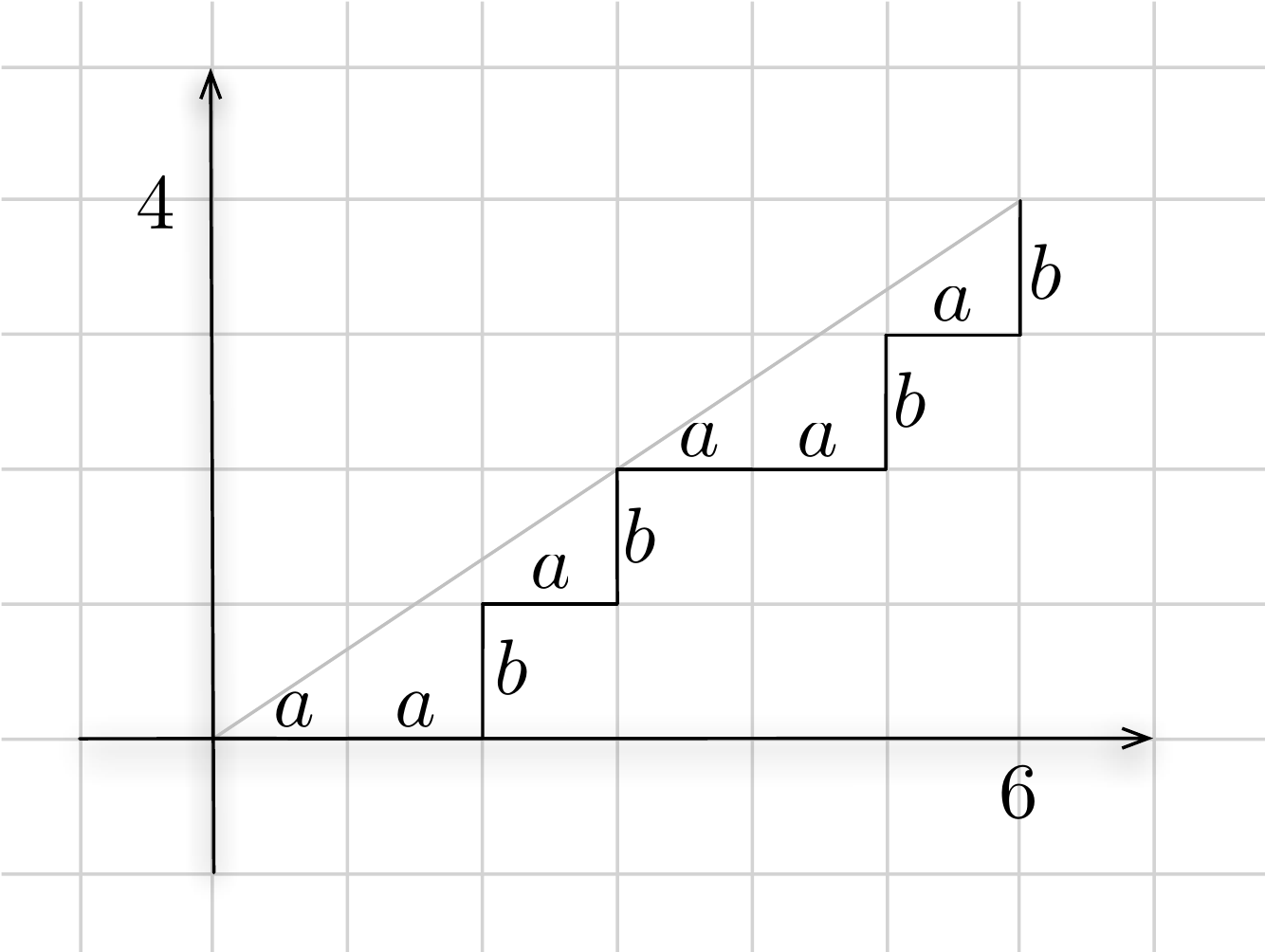}
\end{minipage}
\begin{minipage}{5.7cm}
\includegraphics[height=40mm]{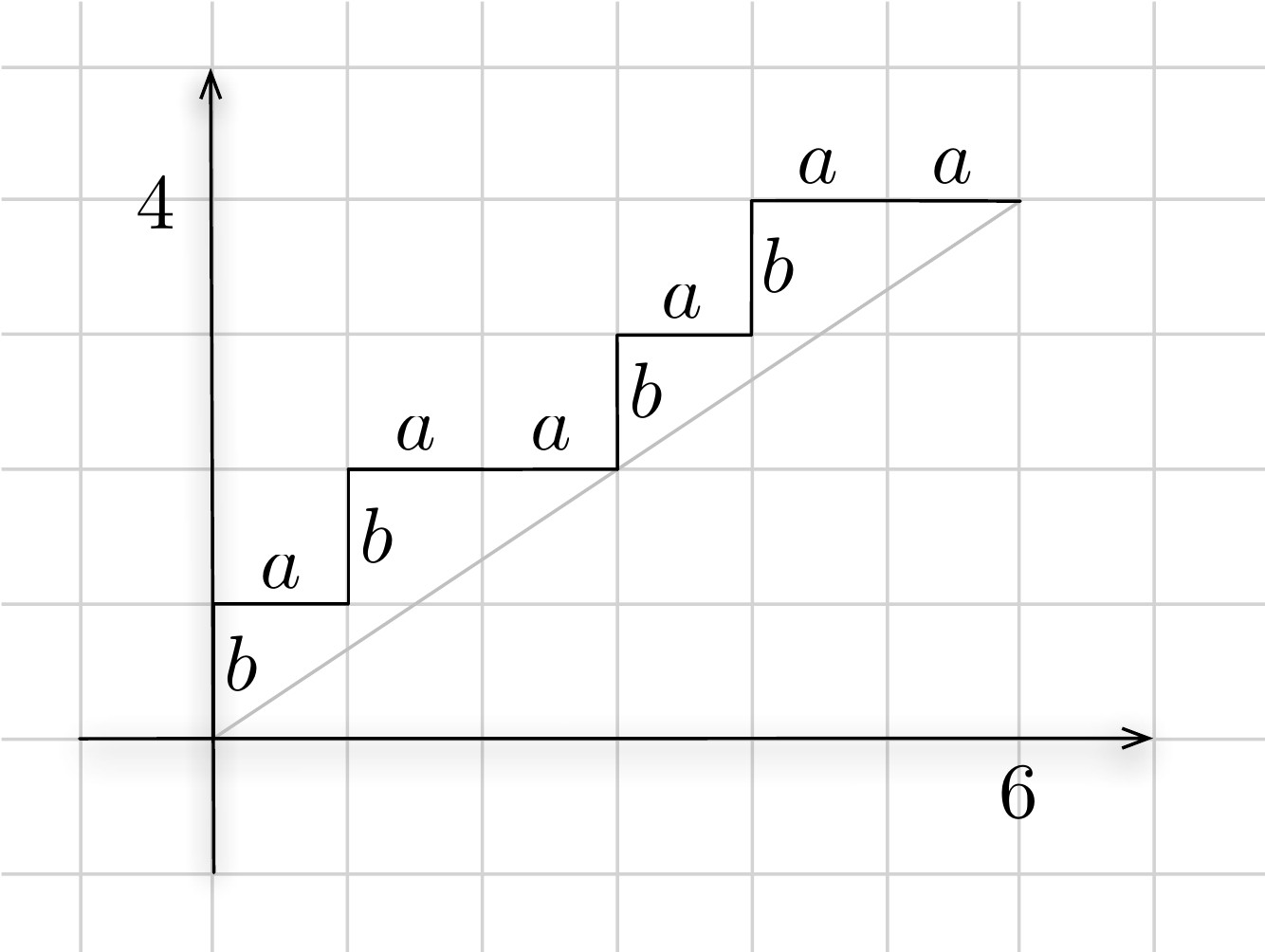}
\end{minipage}
\end{center}
\caption{The lower Christoffel word $w_{6,4}=aababaabab$ (left) and the upper Christoffel word $w'_{6,4}=babaababaa$ (right).\label{fig:GC}}
\end{figure}

Analogously, one can define the upper Christoffel word $w'_{p,q}$ by
\[w'_{p,q}[i] = \left\{ \begin{array}{lllll}
a & \mbox{if $ip$ mod$(n)<(i-1)p$ mod$(n)$,}\\
b & \mbox{if $ip$ mod$(n)>(i-1)p$ mod$(n)$.}\\
\end{array} \right.\]
Of course, the upper Christoffel word $w'_{p,q}$ is the best grid approximation from above of the segment joining $(0,0)$ to $(p,q)$ (see \figurename~\ref{fig:GC}).

\begin{example}
 Let $p=6$ and $q=4$. We have $\{i6$ mod$(10)\mid i=0,1,\ldots,10\}=\{0,6,2,8,4,0,6,2,8,4,0\}$. Hence, $w'_{6,4}=babaababaa$.
\end{example}

The next result follows from elementary geometrical considerations.

\begin{lemma}\label{lem:rev}
 For every pair $(p,q)$ the word $w'_{p,q}$ is the reversal of the word $w_{p,q}$.
\end{lemma}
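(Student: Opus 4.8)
The plan is to establish the pointwise identity $w'_{p,q}[i]=w_{p,q}[n+1-i]$ for all $1\le i\le n$, which is exactly the assertion $w'_{p,q}=\tilde w_{p,q}$. Before computing, it is worth recording the geometric picture that motivates the statement, as hinted just before it. The word $w_{p,q}$ codes the closest lattice path \emph{below} the segment $S$ from $(0,0)$ to $(p,q)$, and $w'_{p,q}$ the closest lattice path \emph{above} $S$, each horizontal (resp.~vertical) unit step being read as $a$ (resp.~$b$). Consider the central symmetry $\sigma(x,y)=(p-x,q-y)$ about the midpoint $(p/2,q/2)$ of $S$. It fixes $S$ as a set while swapping its endpoints, and it exchanges the half-plane below the supporting line of $S$ with the half-plane above; hence it carries the lower path onto the upper path. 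Since $\sigma$ reverses the orientation of $S$, it reverses the order in which the path is traversed, so its $i$-th step corresponds to the $(n+1-i)$-th step of the original; and since $\sigma$ is a translation composed with $-\mathrm{id}$, horizontal steps stay horizontal and vertical steps stay vertical. The two sign reversals (spatial and of traversal order) cancel, so $a$-steps map to $a$-steps and $b$-steps to $b$-steps, which already yields $w'_{p,q}=\tilde w_{p,q}$.

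To turn this into a short rigorous argument I would work directly with the defining formulas. Writing $n=p+q$, the key observation is the congruence $q\equiv -p \pmod n$, which converts the condition defining $w_{p,q}$ into the one defining $w'_{p,q}$. Explicitly, $(n+1-i)q\equiv(1-i)q\equiv(i-1)p\pmod n$ and $(n-i)q\equiv -iq\equiv ip\pmod n$, and since reduction modulo $n$ returns a representative in $\{0,\dots,n-1\}$, this gives
\[
(n+1-i)q \bmod n = (i-1)p \bmod n,\qquad (n-i)q \bmod n = ip \bmod n.
\]
Therefore $w_{p,q}[n+1-i]=a$ iff $(n+1-i)q \bmod n>(n-i)q \bmod n$ iff $(i-1)p \bmod n>ip \bmod n$, which is precisely the condition $w'_{p,q}[i]=a$; the same chain with the reversed inequality handles the letter $b$. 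Hence the two words agree position by position.

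The only delicate points are bookkeeping ones, and I expect them to be the main (indeed the sole) obstacle. First, I must check that the $a$- and $b$-cases are genuinely exhaustive, i.e.\ that consecutive residues are never equal: $iq\equiv(i-1)q\pmod n$ would force $q\equiv 0\pmod n$, impossible since $0<q<n$, and likewise for $p$; so exactly one of the two strict inequalities holds at each position. Second, I must justify that integers congruent modulo $n$ and lying in $\{0,\dots,n-1\}$ are literally equal, which is what legitimises replacing $(n+1-i)q \bmod n$ by $(i-1)p \bmod n$ and similarly for the other term. Once these are dispatched the identity is immediate, and the geometric argument of the first paragraph can be retained as the motivating picture behind the computation.
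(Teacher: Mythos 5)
Your proof is correct; the paper itself offers no argument beyond the remark that the lemma ``follows from elementary geometrical considerations,'' and your central-symmetry picture is exactly that intended argument. Your modular-arithmetic verification (using $q\equiv -p \pmod{n}$ to match the two defining conditions position by position, together with the check that the residues $iq \bmod n$ and $(i-1)q \bmod n$ are never equal) supplies the rigor the paper omits and works directly from the formal definitions, so it is a welcome strengthening rather than a deviation.
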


If (and only if) $p$ and $q$ are coprime, the Christoffel word $w_{p,q}$ intersects the segment joining $(0,0)$ to $(p,q)$ only at the end points, and is a primitive word. Moreover, one can prove that $w_{p,q}=aub$ and $w'_{p,q}=bua$ for a palindrome $u$. Since $u$ is a bispecial Sturmian word and it is a palindrome,  $u$ is a strictly bispecial Sturmian word (by Proposition \ref{prop:sturmstrispe}). Conversely, given a strictly bispecial Sturmian word $u$, $u$ is a central word (by Proposition \ref{prop:sbscen}), and therefore has two coprime periods $p,q$ and length equal to $p+q-2$. Indeed, it can be proved that $aub=w_{p,q}$ and $bua=w'_{p,q}$. The previous properties can be summarized in the following theorem (cf.~\cite{BeDel97}):

\begin{theorem}\label{theor:sbsCP}
  $\SBS=\{w\mid xwy \mbox{ is a primitive Christoffel word, $x,y\in \Sigma\}$}.$
\end{theorem}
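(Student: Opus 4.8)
The plan is to reduce to lower Christoffel words and then prove two inclusions. By Lemma~\ref{lem:rev} we have $w'_{p,q}=\tilde w_{p,q}$, so writing $w_{p,q}=aub$ gives $w'_{p,q}=b\tilde u a$; hence, as soon as we know the central factor $u$ is a palindrome, the primitive lower Christoffel word $aub$ and the primitive upper Christoffel word $bua$ have the same central part $u$. It therefore suffices to prove that $\SBS$ equals the set of words $u$ such that $aub$ is a primitive lower Christoffel word, the general statement with a pair $x,y\in\Sigma$ following at once. At the outset I would also record, directly from the defining congruences, that every $w_{p,q}$ begins with $a$ and ends with $b$, and that Christoffel words are balanced and hence belong to $\St$.

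For one inclusion, let $w_{p,q}=aub$ be primitive, so $n=p+q$ satisfies $\gcd(q,n)=\gcd(q,p)=1$ and $i\mapsto iq\bmod n$ is a bijection of $\{0,\dots,n-1\}$ vanishing only at $0$. Using $(n+1-j)q\equiv-(j-1)q$ and $(n-j)q\equiv-jq\pmod n$, for $2\le j\le n-1$ one gets $(n+1-j)q\bmod n=n-((j-1)q\bmod n)$ and $(n-j)q\bmod n=n-(jq\bmod n)$; these equalities flip the defining inequality, so that $w_{p,q}[n+1-j]=w_{p,q}[j]$ throughout this range, i.e.\ $u$ is a palindrome. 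Since $aub\in\St$ and $\St$ is factorial, $au,ub\in\St$, and applying reversal (which preserves balance and fixes $u$) yields $ua=\widetilde{au}\in\St$ and $bu=\widetilde{ub}\in\St$. Thus $u$ is both left and right special, hence bispecial, and being a palindrome it is strictly bispecial by Proposition~\ref{prop:sturmstrispe}.

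For the converse, take $u\in\SBS$; by Proposition~\ref{prop:sbscen} it is central, so it has coprime periods $p,q$ and length $p+q-2$. Set $n=p+q$; then $w_{p,q}$ has length $n$ and is primitive because $\gcd(p,q)=1$, and by the first paragraph it has the form $au'b$ with $u'$ a palindrome of length $n-2$. The aim is to prove $u'=u$, for which it suffices to show that $u'$ is itself central with periods $p$ and $q$ and then to invoke the uniqueness of the central word with a prescribed pair of coprime periods. Equivalently, one may argue by counting: the map carrying a primitive lower Christoffel word of length $n$ to its central part is injective into $\SBS$ and its domain has exactly $\phi(n)$ elements, so surjectivity onto the central words of length $n-2$ follows once their number is matched.

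The genuinely technical point, and the one I expect to be the main obstacle, is precisely this identification of the two periods: the first inclusion needs only the reflection symmetry $(n-j)q\equiv-jq\pmod n$, whereas establishing that the central part of $w_{p,q}$ has $p$ and $q$ as periods requires controlling the run-structure of the sequence $(iq\bmod n)_i$, which is governed by the continued-fraction expansion of $q/p$. This modular bookkeeping, rather than anything in the passage between special factors, central words and palindromes, is where the real work lies.
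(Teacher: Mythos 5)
Your forward direction is in good shape, and in fact more detailed than what the paper itself offers: the paper only sketches this theorem, citing the literature for the two key facts that $w_{p,q}=aub$ with $u$ a palindrome and that, conversely, $aub=w_{p,q}$ when $u$ is central with coprime periods $p,q$. Your modular computation showing $w_{p,q}[n+1-j]=w_{p,q}[j]$ for $2\le j\le n-1$ is correct, and the passage from $aub\in\St$ to bispeciality of $u$ (factoriality plus closure of $\St$ under reversal, using that $u$ is a palindrome) followed by Proposition~\ref{prop:sturmstrispe} is exactly right. Do note, though, that the balancedness of Christoffel words is not ``direct from the defining congruences''; it is a substantive background fact that both you and the paper take for granted.

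The converse, however, contains a genuine gap, and you flag it yourself: you never show that the central word $u$ with coprime periods $p,q$ and length $p+q-2$ coincides with the interior $u'$ of $w_{p,q}$. You propose two routes --- prove that $u'$ is central with periods $p$ and $q$ and invoke uniqueness of the central word attached to a coprime pair, or match cardinalities using $\SBS(n-2)=\phi(n)$ --- but carry out neither. The first route requires exactly the ``modular bookkeeping'' you defer, plus the uniqueness statement, which is itself a nontrivial theorem about central words that you would need to prove or cite. The second route leans on the enumeration $\SBS(n)=\phi(n+2)$, which in this paper is only derived later (Section~4) from results of de Luca and Mignosi, together with the (easy) injectivity of $(p,q)\mapsto u'$ coming from $|w_{p,q}|_a=p$. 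As written, the hard inclusion is asserted rather than proved. To be fair, the paper does no better --- it writes ``it can be proved that $aub=w_{p,q}$'' and points to Berstel and de Luca --- so your proposal matches the paper's level of rigor on the difficult half while genuinely proving the easy half; but judged as a self-contained argument it is incomplete at precisely the point you identify.
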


If instead $p$ and $q$ are not coprime, then there exist coprime integers $p',q'$ such that $p=rp'$, $q=rq'$, for an integer $r>1$. In this case, we have $w_{p,q}=(w_{p',q'})^{r}$, that is, $w_{p,q}$ is a power of a primitive Christoffel word. Hence, there exists a central Sturmian word $u$ such that $w_{p,q}=(aub)^{r}$ and $w'_{p,q}=(bua)^{r}$. So, we have:

\begin{lemma}\label{lem:npChris}
The word $xwy$, $x\neq y\in \Sigma$, is a Christoffel word if and only if $w=(uyx)^{n}u$, for an integer $n\ge 0$ and a central word $u$. Moreover, $xwy$ is a primitive Christoffel word if and only if $n=0$.
\end{lemma}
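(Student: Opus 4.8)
The plan is to reduce everything to the two facts recalled immediately before the statement: every lower Christoffel word can be written as $(aub)^{r}$ and every upper Christoffel word as $(bua)^{r}$, for some integer $r\ge 1$ and some central word $u$, with $r=1$ characterizing the primitive case. Once this classification is invoked, the lemma becomes a purely combinatorial rewriting, so I expect no genuine difficulty; the real work is to perform the rewriting cleanly and to handle the two admissible letter assignments $(x,y)=(a,b)$ and $(x,y)=(b,a)$ in a uniform way.

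First I would isolate the conjugacy identity on which the whole argument rests. For any word $\beta$ and any letter $\alpha$ one has $(\alpha\beta)^{r}=\alpha(\beta\alpha)^{r-1}\beta$ for every $r\ge 1$, since both sides are just $r$ concatenated copies of $\alpha\beta$. Taking $\alpha=x$ and $\beta=uy$ yields
\[
(xuy)^{r}=x\,(uyx)^{r-1}\,uy ,
\]
so that $(xuy)^{r}=xwy$ holds precisely when $w=(uyx)^{r-1}u$.

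For the ``only if'' direction, let $xwy$ be a Christoffel word with $x\neq y$. A lower Christoffel word always begins with $a$ and ends with $b$, while an upper one begins with $b$ and ends with $a$; hence $(x,y)=(a,b)$ forces $xwy$ to be lower and $(x,y)=(b,a)$ forces it to be upper. In the first case the recalled classification gives $awb=(aub)^{r}=(xuy)^{r}$ for some central $u$ and some $r\ge 1$, and the displayed identity then yields $w=(uyx)^{r-1}u$, which is the asserted form with $n=r-1\ge 0$. The second case is identical, with $bwa=(bua)^{r}=(xuy)^{r}$; alternatively it can be deduced from the first through Lemma~\ref{lem:rev} together with the fact that central words are palindromes.

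Conversely, suppose $w=(uyx)^{n}u$ with $u$ central and $n\ge 0$. The displayed identity with $r=n+1$ gives $xwy=(xuy)^{n+1}$. Since $u$ is central, Proposition~\ref{prop:sbscen} and Theorem~\ref{theor:sbsCP} show that $xuy$ is a primitive Christoffel word, and a power of a primitive Christoffel word is again a Christoffel word (as recalled just before the statement); hence $xwy$ is a Christoffel word. The ``moreover'' part is then immediate, since $xwy=(xuy)^{n+1}$ is primitive if and only if $n+1=1$, i.e.\ $n=0$. The only points requiring mild care are the uniform bookkeeping of the pair $x,y$ across the lower and upper cases and the off-by-one relation $n=r-1$ between the Christoffel exponent and the exponent in the stated normal form; beyond that the argument is routine.
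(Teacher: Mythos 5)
Your proof is correct and follows essentially the same route as the paper, which states the lemma as an immediate consequence of the preceding discussion (every Christoffel word is $(aub)^{r}$ or $(bua)^{r}$ for a central word $u$, with $r=1$ in the primitive case); your explicit use of the conjugacy identity $(\alpha\beta)^{r}=\alpha(\beta\alpha)^{r-1}\beta$ just makes that rewriting precise. Nothing is missing.
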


Recall from \cite{Del97} that the right (resp.~left) palindromic closure of a word $w$ is the (unique) shortest palindrome $w^{(+)}$ (resp.~$w^{(-)}$) such that $w$ is a prefix of $w^{(+)}$ (resp.~a suffix of $w^{(-)}$). If $w=uv$ and $v$ is the longest palindromic suffix of $w$ (resp.~$u$ is the longest palindromic prefix of $w$), then $w^{(+)}=w\tilde{u}$ (resp.~$w^{(-)}=\tilde{v}w$). 

\begin{lemma}\label{lem:rpl}
 Let $xwy$ be a Christoffel word, $x,y\in \Sigma$. Then  $w^{(+)}$ and  $w^{(-)}$ are central words.
\end{lemma}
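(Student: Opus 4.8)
The plan is to reduce everything to the right palindromic closure and then exhibit two coprime periods for it. First I would dispose of $w^{(-)}$ by a reversal argument. By Lemma \ref{lem:rev} the reversal of a Christoffel word is again a Christoffel word, so $y\tilde{w}x$ is a Christoffel word and $\tilde{w}$ has exactly the same shape as $w$ with the two letters interchanged; moreover $w^{(-)}=\widetilde{(\tilde{w})^{(+)}}$. Since the argument below never uses which of $x,y$ is which, it applies verbatim to $\tilde{w}$, so $(\tilde{w})^{(+)}$ is central; as central words are palindromes, its reversal $w^{(-)}$ is central too. Thus it suffices to prove that $w^{(+)}$ is central.

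By Lemma \ref{lem:npChris} write $w=(uyx)^{n}u$ with $u$ central and $n\ge 0$. If $n=0$ then $w=u$ is central, and being a palindrome it coincides with its closure, so $w^{(+)}=u$ is central. The degenerate cases where $u$ is empty or a power of a single letter can be settled by a direct computation (for instance, for $u=x^{k}$ and $n=1$ one finds $w^{(+)}=x^{k}yx^{k+1}yx^{k}$, which is central with periods $k+2$ and $2k+3$). So assume $n\ge 1$ and that $u$ is not a letter power; then $u$ is a palindrome with two coprime periods $p<q$ satisfying $|u|=p+q-2$, and its minimal period is $p$. Set $m:=|u|+2=p+q$, so that $w=(uyx)^{n}u$ has period $m$ and $|w|=(n+1)m-2$.

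The heart of the matter is to locate the longest palindromic suffix of $w$: I claim it has length $|w|-p$, i.e.\ writing $w=\alpha\beta$ with $\beta$ this suffix, $|\alpha|=p$. That the suffix of length $|w|-p$ is a palindrome follows by propagating the period $p$ of $u$ through the global period $m$ of $w$: the comparison of $w[p+i]$ with $w[|w|+1-i]$ reduces, using the palindromicity of $u$ and its period $p$, to the identity $u[i]=u[i+p]$. Conversely, a strictly longer palindromic suffix, of length $|w|-p'$ with $0\le p'<p$, would force $u[i]=u[i+p']$ for all $1\le i\le |u|-p'$, that is a period $p'<p$ of $u$ ($p'=0$ being excluded since $w$ is not a palindrome for $n\ge 1$), contradicting the minimality of $p$. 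Hence $|\alpha|=p$ and $|w^{(+)}|=|w|+p=(n+1)m-2+p$.

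Finally I would exhibit two coprime periods of the palindrome $w^{(+)}$. On one hand, $u$ is a palindromic prefix of $w$, hence a border of the palindrome $w^{(+)}$, which yields the period $r:=|w^{(+)}|-|u|$. On the other hand the period $m$ survives the closure: the appended letters $\tilde{\alpha}$ continue the period, which one verifies by the same propagation as above, now using the period $q$ of $u$ for the interior positions and a short boundary check for the two extreme ones. Since $r+m=|w^{(+)}|+2$ and
\[\gcd(m,r)=\gcd\bigl(m,\,|w^{(+)}|+2\bigr)=\gcd\bigl(p+q,\,(n+1)(p+q)+p\bigr)=\gcd(p+q,p)=\gcd(q,p)=1,\]
the word $w^{(+)}$ has two coprime periods $m$ and $r$ with $|w^{(+)}|=m+r-2$, so it is central by definition. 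The main obstacle is the combinatorial bookkeeping in the two propagation steps — pinning down the longest palindromic suffix and checking the survival of the period $m$ at the boundary positions — together with the standard but essential fact that the minimal period of a non-degenerate central word is its smaller period $p$.
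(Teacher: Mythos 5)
Your proposal is correct in substance but reaches the conclusion by a genuinely different route in the decisive step. Both proofs share the same skeleton: reduce $w^{(-)}$ to $w^{(+)}$ by reversal, invoke Lemma \ref{lem:npChris} to write $w=(uyx)^{n}u$, dispose of the cases $n=0$ and $u$ a letter power, and then locate the longest palindromic suffix of $w$ — and you identify it correctly, since removing the first $p=|P|+2$ letters of $w=Pxy(QyxPxy)^{n}Q$ leaves exactly the suffix $(QyxPxy)^{n}Q$ that the paper exhibits. Where you diverge is in how you certify that the suffix is maximal palindromic and that $w^{(+)}$ is central: the paper uses the structural characterization of Proposition \ref{prop:charcentral} (writing $w^{(+)}=uyxz=zxyu$ for an explicit palindrome $z$, and excluding longer palindromic suffixes via the fact that the longest palindromic suffix of a central word has no internal occurrence), whereas you work directly with the period definition, showing $w^{(+)}$ has the coprime periods $m=p+q$ and $r=|w^{(+)}|-|u|$ summing to $|w^{(+)}|+2$. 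Your $\gcd$ computation and the period bookkeeping check out (indeed $w^{(+)}=(uyx)^{n+1}P$ has period $m$ because $P$ is a prefix of $u$, and the border $u$ gives the period $r$), so the approach is sound and arguably more self-contained, since it lands on the definition of central words rather than on Proposition \ref{prop:charcentral}. The price is that the three facts you explicitly defer — that the suffix of length $|w|-p$ is a palindrome, that no longer palindromic suffix exists, and that the minimal period of a non-degenerate central word is its smaller period $p$ — are exactly the technical content of the lemma; the last one needs a short Fine--Wilf argument (a period $d<p$ together with $p$ would force period $\gcd(d,p)$, and then with $q$ force period $1$), and the two ``propagation'' checks must handle the separator positions $yx$ between consecutive blocks, not only the interior of $u$. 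None of these is an obstruction, but a complete write-up would have to carry them out, which is roughly the work the paper does via Proposition \ref{prop:charcentral}.
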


\begin{proof}
 Let $xwy$ be a Christoffel word, $x,y\in \Sigma$.  By Lemma \ref{lem:npChris}, $w=(uyx)^{n}u$, for an integer $n\ge 0$ and a central word $u$. We prove the claim for the right palindromic closure, the claim for the left palindromic closure will follow by symmetry. If $n=0$, then $w=u$, so $w$ is a palindrome and then $w^{(+)}=w$ is a central word. So suppose $n>0$. We first consider the case when $u$ is the power of a single letter (including the case $u=\epsilon$). We have that either $w=(y^{k+1}x)^{n}y^{k}$ or $w=(x^{k}yx)^{n}x^{k}$ for some $k\ge 0$. In the first case, $w^{(+)}=wy=(y^{k+1}x)^{n}y^{k+1}$, whereas in the second case $w^{(+)}=wyx^{k}=(x^{k}yx)^{n}x^{k}yx^{k}$. In both cases one has that $w^{(+)}$ is a strictly bispecial Sturmian word, and thus, by Proposition \ref{prop:sbscen}, a central word.

Let now $u$ be not the power of a single letter. Hence, by Proposition \ref{prop:charcentral}, there exist palindromes $P,Q$ such that $u=PxyQ=QyxP$. Now, observe that 
\[
 w  =  (uyx)^{n}u
   =  Pxy(QyxPxy)^{n}Q
\]
We claim that the longest palindromic suffix of $w$ is $(QyxPxy)^{n}Q$. Indeed, the longest palindromic suffix of $w$ cannot be $w$ itself since $w$ is not a palindrome, so since any  palindromic suffix of $w$ longer than $(QyxPxy)^{n}Q$ must start in $u$, in order to prove the claim it is enough to show that the first non-prefix occurrence of $u$ in $w$ is that appearing as a prefix of $(QyxPxy)^{n}Q$. Now, since the prefix $v=PxyQyxP$ of $w$ can be written as $v=uyxP=Pxyu$, one has by Proposition \ref{prop:charcentral} that $v$ is a central word. It is easy to prove (see, for example, \cite{BuDelFi12}) that the longest palindromic suffix of a central word does not have internal occurrences, that is, appears in the central word only as a prefix and as a suffix. Therefore, since $|u|>|P|$, $u$ is the longest palindromic suffix of $v$ (by Proposition \ref{prop:charcentral}), and so appears in $v$ only as a prefix and as a suffix. This shows that $(QyxPxy)^{n}Q$ is the longest palindromic suffix of $w$.

Thus, we have $w^{(+)}=wyxP$, and we can write: 
\begin{eqnarray*}
 w^{(+)} & = & Pxy(QyxPxy)^{n}QyxP\\
 & = & PxyQ \cdot yx \cdot P(xyQyxP)^{n}\\
 & = & (PxyQyx)^{n}P \cdot xy \cdot QyxP
\end{eqnarray*}
so that $w^{(+)}=uyxz=zxyu$ for the palindrome $z=P(xyQyxP)^{n}=(PxyQyx)^{n}P$. By Proposition \ref{prop:charcentral}, $w^{(+)}$ is a central word.\qed
\end{proof}

We are now ready to state our main result.

\begin{theorem}\label{theor:main}
$\BS=\{w\mid xwy \mbox{ is a Christoffel word, $x,y\in \Sigma\}$}.$
\end{theorem}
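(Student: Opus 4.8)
The plan is to prove the two inclusions separately, leaning on the central-word machinery of Lemmas \ref{lem:npChris} and \ref{lem:rpl}. The inclusion $\{w\mid xwy\text{ is a Christoffel word}\}\subseteq\BS$ is the easy one: given a Christoffel word $xwy$, Lemma \ref{lem:rpl} tells us that both $w^{(+)}$ and $w^{(-)}$ are central words, hence strictly bispecial Sturmian words by Proposition \ref{prop:sbscen}. Since $w$ is a prefix of $w^{(+)}\in\SBS$ and a suffix of $w^{(-)}\in\SBS$, Lemma \ref{lem:prefsuf} yields $w\in\LS$ and $w\in\RS$, so $w\in\LS\cap\RS=\BS$. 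This direction is essentially a one-line consequence of the technical Lemma \ref{lem:rpl}.

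For the reverse inclusion I would take $w\in\BS=\SBS\cup\WBS$ and distinguish the two kinds of bispecial words. If $w\in\SBS$ then Theorem \ref{theor:sbsCP} already provides letters $x,y$ with $xwy$ a primitive Christoffel word, and there is nothing more to do. The substance of the proof is therefore entirely in the non-strict case $w\in\WBS$, where by Proposition \ref{prop:sturmstrispe} the word $w$ is \emph{not} a palindrome. Here the goal is to produce the decomposition $w=(uyx)^{n}u$ of Lemma \ref{lem:npChris} with a central word $u$ and some $n\ge 1$; once this is established, Lemma \ref{lem:npChris} immediately gives that $xwy$ is a (non-primitive) Christoffel word for the appropriate letters.

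To obtain that decomposition I would argue as follows. Since $w\in\LS$, $w$ is a prefix of a central word, and the right palindromic closure $w^{(+)}$ is the shortest such central word; its centrality is the key external input, coming from de Luca's theory of palindromic closure \cite{Del97}. Because $w$ is not a palindrome, $w^{(+)}\ne w$ and $w^{(+)}$ is not a power of a single letter, so Proposition \ref{prop:charcentral} lets me write $w^{(+)}=PxyQ=QyxP$ with $P,Q$ palindromes, $x\ne y$, and $Q$ the longest palindromic suffix, whence $|P|<|Q|$. If $|w|\le|Q|$ then $w$ would be a prefix of the strictly shorter palindrome $Q$, contradicting the minimality of $w^{(+)}$; hence $|w|>|Q|\ge|P|+1$, so $|w|\ge|P|+2$. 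As $P$ occurs both as a prefix and as a suffix of $w^{(+)}$, the shorter period of $w^{(+)}$ equals $|P|+2$, and any prefix of length at least $|P|+2$ inherits it; since the length-$(|P|+2)$ prefix of $w^{(+)}=PxyQ$ is $Pxy$, the word $w$ is a prefix of $(Pxy)^{\omega}$. Identifying $u=P$, it then remains to show that $w$ stops exactly at a $P$-boundary, i.e. $w=(Pxy)^{n}P$ with $n\ge 1$.

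The main obstacle is precisely this last point: ruling out that $w$ ends in the middle of a period. I expect to settle it by using the explicit form of the closure $w^{(+)}=w\tilde g$, where $g$ is $w$ deprived of its longest palindromic suffix, together with the period $|P|+2$, to force $|w|\equiv|P|\pmod{|P|+2}$. Conceptually this non-strict case is the mirror image of Lemma \ref{lem:rpl}, which it inverts, so the delicate bookkeeping about longest palindromic suffixes is the same as there. As a robustness check one can run the symmetric argument starting from $w^{(-)}$ (using $w\in\RS$) and appeal to the reversal-invariance of $\BS$ and, via Lemma \ref{lem:rev}, of the set of Christoffel words.
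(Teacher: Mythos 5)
Your first inclusion and the strictly bispecial case of the second coincide with the paper's argument. The genuine gap is in the non-strict case, at precisely the step you defer: proving that $w$ ends at a $P$-boundary, i.e.\ $|w|\equiv|P|\pmod{|P|+2}$. The method you sketch cannot establish this, because after invoking $w\in\LS$ you never again use the hypothesis that $w$ is also \emph{right} special. Every fact you derive --- $w^{(+)}$ central, $w^{(+)}=PxyQ=QyxP$, $|w|>|Q|$, $w$ a prefix of $(Pxy)^{\omega}$ --- is shared by all $|P|+1$ non-palindromic prefixes of $w^{(+)}$ whose length lies in $\{|Q|+1,\dots,|P|+|Q|+1\}$, and at most one of those lengths is congruent to $|P|$ modulo $|P|+2$. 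A concrete witness: $w=aabaab$ is a non-palindromic prefix of the central word $aabaabaa=PxyQ=QyxP$ with $P=a$, $Q=aabaa$, $x=a$, $y=b$; its longest palindromic suffix is $baab$, so $w^{(+)}=w\cdot aa=aabaabaa$ and $|w|=6>|Q|=5$, yet $6\not\equiv 1\pmod 3$ and $w$ is not bispecial ($aabaabb$ contains both $aa$ and $bb$, so $w\notin\RS$). Hence the congruence you hope to force from $w^{(+)}=w\tilde{g}$ and the period $|P|+2$ is simply false under your standing hypotheses; the right-speciality of $w$ (equivalently, the centrality of $w^{(-)}$), which you relegate to an optional ``robustness check,'' is an indispensable second input at this very step, and you do not show how the two closures are to be combined.

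Two further points. First, the centrality of $w^{(+)}$ for $w\in\LS$ is a correct fact from de Luca's theory, but it is not available from Lemma~\ref{lem:rpl}, whose hypothesis is that $xwy$ is a Christoffel word --- the very thing you are proving --- so your argument genuinely rests on an imported result; you would also need $P$ to be central (true, but unproved here) before Lemma~\ref{lem:npChris} applies to $(Pxy)^{n}P$, and the border yielding the period $|P|+2$ is $Q$, not $P$ (the border $P$ gives the longer period $|Q|+2$). Second, for contrast: the paper avoids palindromic closure entirely in this direction. It takes $u$ to be the longest palindromic \emph{border} of $w$ and shows $w=(uyx)^{n}u$ by a direct case analysis in which every deviation from that form produces an unbalanced pair of factors (such as $yuy$ and $xux$, or $uyy$ and $xux$) inside $yw$ or $wx$; the balance property is exactly where both the left- and the right-extendibility of $w$ are consumed. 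Reworking your closing step along those lines, or genuinely intersecting the conditions coming from $w^{(+)}$ and $w^{(-)}$, is what is needed to complete the proof.
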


\begin{proof}
 Let $xwy$ be a Christoffel word, $x,y\in \Sigma$.  Then, by Lemma \ref{lem:npChris}, $w$ is of the form $w=(uyx)^{n}u$, $n\ge 0$, for a central word $u$. By Lemma \ref{lem:rpl}, $w$ is a prefix of the central word $w^{(+)}$ and a suffix of the central word $w^{(-)}$, and therefore, by Proposition \ref{prop:sbscen} and Lemma \ref{lem:prefsuf}, $w$ is a bispecial Sturmian word.
 
 Conversely, let $w$ be a bispecial Sturmian word, that is, suppose that the words $xw$, $yw$, $wx$ and $wy$ are all Sturmian. If $w$ is strictly bispecial, then $w$ is a central word by Proposition \ref{prop:sbscen}, and $xwy$ is a (primitive) Christoffel word by Theorem \ref{theor:sbsCP}. So suppose $w\in \WBS$. By Lemma \ref{lem:npChris}, it is enough to prove that $w$ is of the form $w=(uyx)^{n}u$, $n\ge 1$, for a central word $u$ and letters $x\neq y$. Since $w$ is not a strictly bispecial Sturmian word, it is not a palindrome (by Proposition \ref{prop:sturmstrispe}). Let $u$ be the longest palindromic border of $w$ (that is, the longest palindromic prefix of $w$ that is also a suffix of $w$), so that $w=uyzxu$, $x\neq y\in \Sigma$, $z\in \Sigma^{*}$. If $z=\varepsilon$, $w=uyxu$ and we are done. Otherwise, it must be $z=xz'y$ for some $z'\in \Sigma^{*}$, since otherwise either the word $yw$ would contain $yuy$ and $xxu$ as factors (a contradiction with the hypothesis that $yw$ is a Sturmian word) or the word $wx$ would contain $uyy$ and $xux$ as factors (a contradiction with the hypothesis that $wx$ is a Sturmian word). 
 
So $w=uyxz'yxu$. If $u=\varepsilon$, then it must be $z=(yx)^{k}$ for some $k\ge 0$, since otherwise either  $xx$ would appear as a factor in $w$, and therefore the word $yw$ would contain $xx$ and $yy$ as factors, being not a Sturmian word, or $yy$ would appear as a factor in $w$, and therefore the word $wx$ would contain $xx$ and $yy$ as factors, being not a Sturmian word. Hence, if $u=\epsilon$ we are done, and so we suppose $|u|>0$. 
 
 By contradiction, suppose that $w$ is not of the form $w=(uyx)^{n}u$. That is, let $w=(uyx)^{k}u'av$, with $k\ge 1$, $v\in \Sigma^{*}$, $u'b\in \Pref(uyx)$, for different letters $a$ and $b$. If $|u'|\ge |u|$, then either $|u'|=|u|$ or $|u'|=|u|+1$. In the first case, $u'=u$ and $w=(uyx)^{k}uxv'$, for some $v'\in \Sigma^{*}$, and then the word $yw$ would contain $yuy$ and $xux$ as factors, being not a Sturmian word. In the second case, $u'=uy$ and $w=(uyx)^{k}uyyv''$, for some $v''\in \Sigma^{*}$; since $xu$ is a suffix of $w$, and therefore $w=(uyx)^{k}v'''xu$ for some $v'''\in \Sigma^{*}$, we would have that the word $wx$ contains both $uyy$ and $xux$ as factors, being not a Sturmian word. Thus, we can suppose $u'b\in \Pref(u)$. Now, if $a=x$ and $b=y$, then the word $yw$ would contain the factors $yu'y$ and $xu'x$, being not a Sturmian word; if instead $a=y$ and $b=x$, let $u=u'xu''$, so that we can write $w=(uyx)^{k}u'yv=(uyx)^{k-1}u'xu''yxu'yv$. The word $wx$ would therefore contain the factors $u''yxu'y$ and $xux=xu'xu''x$ (since $xu$ is a suffix of $w$), being not a Sturmian word (see~\figurename~\ref{fig:theorem}). In all the cases we obtain a contradiction and the proof is thus complete.\qed
\end{proof}

\begin{figure}[ht]
\begin{center}
\includegraphics[height=32mm]{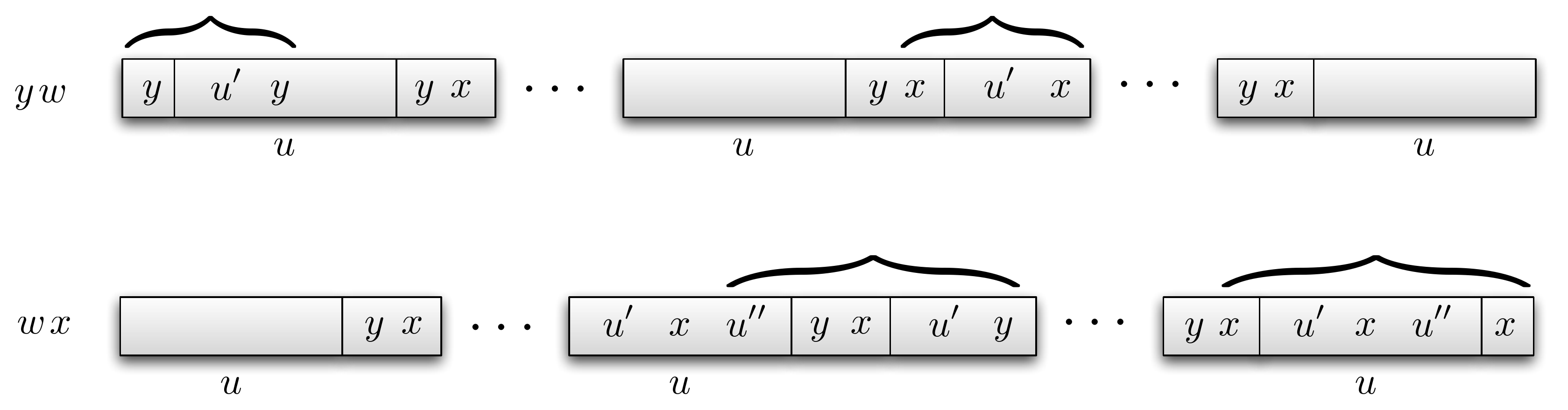}
\caption{The proof of Theorem \ref{theor:main}.}
\label{fig:theorem}
\end{center}
\end{figure}

So, bispecial Sturmian words are the maximal internal factors of Christoffel words. Every bispecial Sturmian word is therefore of the form $w=(uyx)^{n}u$, $n\ge 0$, for different letters $x,y$ and a central word $u$. The word $w$ is strictly bispecial if and only if $n=0$. If $n=1$, $w$ is a \emph{semicentral word} \cite{BuDelFi12}, that is, a word in which the longest repeated prefix, the longest repeated suffix, the longest left special factor and the longest right special factor all coincide.

\section{Enumeration of bispecial Sturmian words}\label{sec:En}

In this section we give an enumerative formula for bispecial Sturmian words. It is known that the number of Sturmian words of length $n$ is given by
\[St(n)=1+\sum_{i=1}^{n}(n-i+1)\phi(i)\]
where $\phi$ is the Euler totient function, i.e., $\phi(n)$ is the number of positive integers smaller than or equal to $n$ and coprime with $n$ (cf.~\cite{Mig91,Lip82}).

Let $w$ be a Sturmian word of length $n$. If $w$ is left special, then $aw$ and $bw$ are Sturmian words of length $n+1$. If instead $w$ if not left special, then only one between $aw$ and $bw$ is a Sturmian word of length $n+1$. Therefore, we have $\LS(n)=St(n+1)-St(n)$ and hence
\[\LS(n)=\sum_{i=1}^{n+1}\phi(i)\]

Using a symmetric argument, one has that also \[\RS(n)=\sum_{i=1}^{n+1}\phi(i)\]

Since \cite{DelMi94} $\SBS(n)=\LS(n+1)-\LS(n)=\RS(n+1)-\RS(n)$, we have
\[\SBS(n)=\phi(n+2)\]

Therefore, in order to find an enumerative formula for bispecial Sturmian words, we only have to enumerate the non-strictly bispecial Sturmian words. We do this in the next proposition.

\begin{proposition}
For every $n>1$, one has \[\WBS(n)=2\left(n+1-\phi(n+2)\right)\]
\end{proposition}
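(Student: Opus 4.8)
The plan is to count non-strictly bispecial Sturmian words of length $n$ by exploiting the structure theorem just established: every bispecial Sturmian word $w$ is of the form $w=(uyx)^{m}u$ for a central word $u$, different letters $x,y\in\Sigma$, and an integer $m\ge 0$, with $w$ strictly bispecial exactly when $m=0$. Thus $\WBS(n)$ counts precisely those $w$ of this form with $m\ge 1$ and $|w|=n$. Equivalently, by Lemma \ref{lem:npChris}, I would count the non-primitive Christoffel words $xwy$ of length $n+2$, since $xwy$ is a Christoffel word with $w\in\WBS$ iff it is non-primitive, and then relate the two letter-choices $(x,y)$ to the counting.

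First I would fix the correspondence with Christoffel words of length $N=n+2$. By Theorem \ref{theor:main} and Lemma \ref{lem:npChris}, the words $w\in\BS(n)$ correspond to Christoffel words $xwy$ of length $n+2$, with strict bispeciality matching primitivity. Recall from the remark after the Christoffel definition that for each $N>1$ there are exactly $N-1$ lower Christoffel words $w_{p,q}$, indexed by pairs $(p,q)$ with $p,q>0$ and $p+q=N$; among these the primitive ones are those with $\gcd(p,q)=1$, numbered $\phi(N)$ in a suitable count (this matches $\SBS(n)=\phi(n+2)$ already derived). So the non-primitive lower Christoffel words of length $n+2$ number $(n+1)-\phi(n+2)$.

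Next I would account for the factor of $2$. Each non-strictly bispecial $w$ arises as the maximal internal factor of a Christoffel word; the key point is that the two choices $xwy$ with $(x,y)=(a,b)$ and $(x,y)=(b,a)$ give the lower and upper Christoffel words respectively, and these are genuinely distinct words (they are reversals of one another by Lemma \ref{lem:rev}, and a non-primitive Christoffel word is not a palindrome). Hence the map $w\mapsto xwy$ is a bijection from $\WBS(n)$ onto the set of non-primitive Christoffel words of length $n+2$, counting both lower and upper types, which doubles the count to $2\bigl((n+1)-\phi(n+2)\bigr)$. I would verify that distinct bispecial words give distinct Christoffel words and that every non-primitive Christoffel word of length $n+2$ arises exactly once, so no over- or under-counting occurs.

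The main obstacle I anticipate is the bookkeeping of the enumeration rather than any deep idea: I must be careful that the count $n+1$ of lower Christoffel words of length $n+2$ with the primitive ones removed exactly yields $n+1-\phi(n+2)$, and that the passage between the pair-indexing $(p,q)$ and the word-level count of primitive Christoffel words is consistent with the already-derived $\SBS(n)=\phi(n+2)$. In particular I would confirm that the number of primitive lower Christoffel words of length $n+2$ really is $\phi(n+2)$ (and not something off by the treatment of the trivial pairs), so that subtracting gives $n+1-\phi(n+2)$; then the factor $2$ from the lower/upper dichotomy completes the formula $\WBS(n)=2\bigl(n+1-\phi(n+2)\bigr)$. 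The injectivity argument — that $w$ determines the Christoffel word up to the $(x,y)$ choice — should follow directly from $w=(uyx)^{m}u$ recovering $u$, $m$, $x$, $y$ uniquely, which I would spell out briefly.
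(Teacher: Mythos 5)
Your proposal is correct and follows essentially the same route as the paper: both count the $n+1$ lower Christoffel words of length $n+2$, remove the $\phi(n+2)$ that correspond to strictly bispecial (central) words, and double the remainder via the lower/upper (reversal) symmetry after checking that no word is counted twice. The only cosmetic difference is that you phrase the dichotomy as primitive versus non-primitive Christoffel words while the paper phrases it as palindromic versus non-palindromic internal factors; these are equivalent by Theorem \ref{theor:sbsCP} and Proposition \ref{prop:sturmstrispe}.
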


\begin{proof}
Let \[W_{n}=\{w \mid \mbox{  $awb$ is a lower Christoffel word of length $n+2$} \}\] and \[W'_{n}=\{w' \mid \mbox{  $bw'a$ is an upper Christoffel word of length $n+2$} \}\] By Theorem \ref{theor:main}, the bispecial Sturmian words of length $n$ are the words in $W_{n}\cup W'_{n}$. 

Among the $n+1$ words in $W_{n}$, there are $\phi(n+2)$ strictly bispecial Sturmian words, that are precisely the palindromes in $W_{n}$. The $n+1-\phi(n+2)$ words in $W_{n}$ that are not palindromes are non-strictly bispecial Sturmian words. The other non-strictly bispecial Sturmian words of length $n$ are the $n+1-\phi(n+2)$ words in $W'_{n}$ that are not palindromes. Since the words in $W'_{n}$ are the reversals of the words in $W_{n}$, and since no non-strictly bispecial Sturmian word is a palindrome by Proposition \ref{prop:sturmstrispe}, there are a total of $2(n+1-\phi(n+2))$ non-strictly bispecial Sturmian words of length $n$.\qed
\end{proof}

\begin{corollary}\label{cor:formula}
 For every $n\ge 0$, there are $2(n+1)-\phi(n+2)$ bispecial Sturmian words of length $n$.
\end{corollary}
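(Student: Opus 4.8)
The plan is to deduce Corollary~\ref{cor:formula} by combining the count of strictly bispecial Sturmian words with the count of non-strictly bispecial ones, using the decomposition $\BS=\SBS\cup\WBS$ together with the fact that this union is disjoint. Since every bispecial Sturmian word is, by definition, either strictly or non-strictly bispecial, and these two possibilities are mutually exclusive, I have $\BS(n)=\SBS(n)+\WBS(n)$ for every $n$. Thus the corollary is purely a matter of adding the two formulas already established.

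First I would recall the value $\SBS(n)=\phi(n+2)$, which was derived earlier in the section from the identity $\SBS(n)=\LS(n+1)-\LS(n)$ together with the closed form $\LS(n)=\sum_{i=1}^{n+1}\phi(i)$. Next I would invoke the immediately preceding proposition, which gives $\WBS(n)=2\left(n+1-\phi(n+2)\right)$ for $n>1$. Adding these,
\[
\BS(n)=\SBS(n)+\WBS(n)=\phi(n+2)+2\left(n+1-\phi(n+2)\right)=2(n+1)-\phi(n+2),
\]
which is exactly the claimed formula.

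The only genuine subtlety is the range of validity. The proposition for $\WBS(n)$ is stated for $n>1$, so the computation above establishes the corollary directly for $n>1$. To obtain the full statement for every $n\ge 0$, I would verify the remaining small cases $n=0$ and $n=1$ by hand. For $n=0$ the empty word is the unique (strictly) bispecial Sturmian word, and indeed $2(0+1)-\phi(2)=2-1=1$. For $n=1$ the bispecial Sturmian words are $a$ and $b$, both strictly bispecial, and $2(1+1)-\phi(3)=4-2=2$, matching. I expect no real obstacle here: the main content is already carried by the preceding proposition and by the computation of $\SBS(n)$, and the corollary is essentially a bookkeeping step, with the only care needed being the explicit treatment of the boundary values $n\in\{0,1\}$ that fall outside the stated range of the proposition.
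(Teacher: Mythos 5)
Your proof is correct and follows exactly the route the paper intends: the corollary is obtained by adding $\SBS(n)=\phi(n+2)$ to the count $\WBS(n)=2(n+1-\phi(n+2))$ from the preceding proposition, using the disjoint decomposition $\BS=\SBS\cup\WBS$. Your explicit verification of the boundary cases $n=0$ and $n=1$ (which fall outside the stated range $n>1$ of the proposition) is a small but welcome addition that the paper leaves implicit.
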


\begin{example}
The Christoffel words of length $12$ and their maximal internal factors, the bispecial Sturmian words of length $10$, are reported in Table \ref{tab:example}. 

\begin{table}[h]
\begin{center}
  \begin{tabular}{|c | c | c | }
  
\ pair $(p,q)$ \ & \ lower Christoffel word $w_{p,q}$ \ & \ upper Christoffel word $w'_{p,q}$ \  \\    \hline 
 $(11,1)$ &    $a\underline{aaaaaaaaaa}b$   & $b\underline{aaaaaaaaaa}a$        \\
 $(10,2)$ &    $aaaaabaaaaab$   & $baaaaabaaaaa  $      \\
 $(9,3)$  &    $aaabaaabaaab$   & $baaabaaabaaa $       \\
 $(8,4)$  &    $aabaabaabaab$   & $baabaabaabaa $       \\
 $(7,5)$  &    $a\underline{ababaababa}b$   & $b\underline{ababaababa}a $       \\
 $(6,6)$  &    $abababababab$   & $babababababa $    \\
 $(5,7)$  &    $a\underline{bababbabab}b$   & $b\underline{bababbabab}a $       \\
 $(4,8)$  &    $abbabbabbabb$   & $bbabbabbabba  $      \\
 $(3,9)$  &    $abbbabbbabbb$   & $bbbabbbabbba $       \\
 $(2,10)$ &    $abbbbbabbbbb$   & $bbbbbabbbbba $       \\
 $(1,11)$ &    $a\underline{bbbbbbbbbb}b$   & $b\underline{bbbbbbbbbb}a $       \\    
 \hline 
  \end{tabular}\vspace{4mm}
\end{center}\caption{The Christoffel words of length $12$. Their maximal internal factors are the bispecial Sturmian words of length $10$. There are $4=\phi(12)$ strictly bispecial Sturmian words, that are the palindromes $aaaaaaaaaa$, $ababaababa$, $bababbabab$ and $bbbbbbbbbb$ (underlined), and $14=2(11-4)$ non-strictly bispecial Sturmian words: $aaaaabaaaa$, $aaaabaaaaa$, $aaabaaabaa$, $aabaaabaaa$, $aabaabaaba$, $abaabaabaa$, $ababababab$, $bababababa$, $babbabbabb$, $bbabbabbab$, $bbabbbabbb$, $bbbabbbabb$, $bbbbabbbbb$ and  $bbbbbabbbb$.\label{tab:example}}
\end{table}
\end{example}

\section{Minimal forbidden words}\label{sec:MF}

Given a factorial language $L$ (that is, a language containing all the factors of its words) over the alphabet $\Sigma$, a word $v\in \Sigma^{*}$ is a minimal forbidden word for $L$ if $v$ does not belong to $L$ but every proper factor of $v$ does (see~\cite{CrMiRe98} for further details). Minimal forbidden words represent a powerful tool to investigate the structure of a factorial language (cf.~\cite{BeMiRe96}). In the next theorem, we give a characterization of the set $\MF$ of minimal forbidden words for the language $\St$.

\begin{theorem}\label{theor:mf}
 $\MF=\{ywx \mid xwy \mbox{ is a non-primitive Christoffel word, $x,y\in \Sigma$}\}.$
\end{theorem}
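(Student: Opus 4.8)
The plan is to prove the two inclusions separately, translating the ``minimal forbidden word'' condition into statements about Sturmian factors and then applying the structural results already established, in particular Theorem \ref{theor:main} and Lemma \ref{lem:npChris}.

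First I would unwind the definition. A word $v$ is in $\MF$ exactly when $v\notin\St$ but every proper factor of $v$ lies in $\St$. Since $\St$ is factorial, the second condition reduces to requiring only that the two factors of length $|v|-1$ obtained by deleting the first or the last letter are Sturmian; that is, if $v=ywx$ with $x,y\in\Sigma$, then $yw\in\St$ and $wx\in\St$, while $ywx\notin\St$. A short separate observation handles the degenerate structure: a minimal forbidden word must have length at least two, and writing $v=ywx$ I would first argue that we may take $x\neq y$. Indeed, if $x=y$, then $yw$ and $wy$ being Sturmian together with $ywy$ not Sturmian forces, via the balance property, a configuration that one checks corresponds to a square of Christoffel form; but cleaner is to note that the claimed right-hand side always has $\{x,y\}=\{a,b\}$, so I would show directly that the minimality of $v$ forbids the case $x=y$ by exhibiting a shorter forbidden factor inside $v$.

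For the inclusion $\supseteq$, suppose $xwy$ is a non-primitive Christoffel word with $x\neq y$. By Lemma \ref{lem:npChris} we have $w=(uyx)^{n}u$ for a central word $u$ and some $n\geq 1$ (non-primitivity gives $n\geq1$). I want to show $ywx\in\MF$. By Theorem \ref{theor:main}, $w$ is a bispecial Sturmian word, so in particular $yw$ and $wx$ are Sturmian (these are among the extensions witnessing that $w$ is left special, resp.\ right special). Thus every proper factor of $ywx$ is Sturmian. It remains to show $ywx\notin\St$, i.e.\ that $ywx$ is unbalanced. The key computation is that $ywx=y(uyx)^{n}ux$ contains two factors of equal length with $a$-counts differing by $2$: concretely, using $w=(uyx)^{n}u$ one locates the occurrence $yuy$ (reading across a boundary $\cdots x\mid u\mid yx\cdots$ shifted by the prefix $y$) together with an occurrence of $xux$, and since $yuy$ and $xux$ have the same length but differ by two in the number of each letter, the balance property fails. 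This is the heart of the forbidden direction, and it mirrors exactly the unbalance arguments used in the proof of Theorem \ref{theor:main}.

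For the inclusion $\subseteq$, take $v=ywx\in\MF$ with $x\neq y$ (justified above). Then $wx,yw\in\St$ but $ywx\notin\St$, so $w$ can be extended by $x$ on the right and by $y$ on the left inside $\St$; I would argue $w$ is in fact bispecial. Since $yw\in\St$ we also need $w$ right-extendable on \emph{some} side and left-extendable, and the failure of $ywx$ forces both one-sided extensions to exist, giving $w\in\BS$ via Proposition \ref{prop:bisp} (the relevant case being $|\Sigma w\Sigma\cap\St|=3$, the non-strict one). By Theorem \ref{theor:main}, $x'wy'$ is a Christoffel word for suitable letters, and a short argument identifies the correct orientation so that $xwy$ is the Christoffel word whose reversal-type mirror $ywx$ is forbidden. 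Finally, $xwy$ must be non-primitive: if it were primitive, then by Theorem \ref{theor:sbsCP} $w$ would be strictly bispecial, hence a palindrome (Proposition \ref{prop:sturmstrispe}), and then $ywx$ would be the reversal of $xwy$ with $x,y$ swapped, which a direct check shows is again Sturmian, contradicting $ywx\in\MF$. The main obstacle I anticipate is pinning down the orientation bookkeeping between $xwy$, its reversal, and the forbidden $ywx$ so that the non-primitivity transfers cleanly; the unbalance computation in the $\supseteq$ direction is routine given the tools already developed.
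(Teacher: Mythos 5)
Your proposal follows essentially the same route as the paper: both directions pivot on Theorem \ref{theor:main} together with Theorem \ref{theor:sbsCP}, and in the converse direction both arguments use the extendibility of $\St$ to upgrade $yw,wx\in\St$ to bispeciality of $w$ and then invoke non-strictness. The one genuine difference is in the forward inclusion: where you exhibit the unbalanced pair $yuy$ (a prefix of $ywx=y(uyx)^{n}ux$) and $xux$ (a suffix of it) to conclude $ywx\notin\St$ directly, the paper instead cites the result of \cite{DelMi94} that a word $w$ with both $xwy$ and $ywx$ Sturmian must be central, hence strictly bispecial, contradicting non-primitivity via Proposition \ref{prop:sbscen}. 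Your computation is correct for $n\ge 1$ and has the advantage of being self-contained. Two caveats. First, your plan to exclude the case $x=y$ by ``exhibiting a shorter forbidden factor inside $v$'' cannot work as stated: by definition every proper factor of a minimal forbidden word lies in $\St$, so no such factor exists. The correct argument (implicit in the paper) is that extendibility yields $ywz\in\St$ for some letter $z\neq x$ and $uwx\in\St$ for some letter $u\neq y$; this both establishes that $w$ is bispecial and, combined with Proposition \ref{prop:bisp} and the forward inclusion (the unique missing two-sided extension of a non-strictly bispecial word is $y'wx'$ with $x'\neq y'$), rules out $x=y$. Second, the ``orientation bookkeeping'' you flag is resolved by the same observation: exactly one extension is missing, the forward direction identifies it as $y'wx'$ where $x'wy'$ is the Christoffel word, and since the missing one is $ywx$ we get $x'=x$ and $y'=y$. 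Neither issue undermines the overall strategy, which is sound and matches the paper's.
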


\begin{proof}
If $xwy$ is a non-primitive Christoffel word, then by Theorems \ref{theor:sbsCP} and  \ref{theor:main}, $w$ is a non-strictly bispecial Sturmian word. This implies that $ywx$ is not a Sturmian word, since a word $w$ such that $xwy$ and $ywx$ are both Sturmian is a central word \cite{DelMi94}, and therefore a strictly bispecial Sturmian word (Proposition \ref{prop:sbscen}). Since $yw$ and $wx$ are Sturmian words, we have $ywx\in \MF$.

Conversely, let $ywx\in \MF$. By definition, $yw$ is Sturmian, and therefore $ywy$ must be a  Sturmian word since $\St$ is an extendible language. Analogously, since $wx$ is Sturmian, the word $xwx$ must be a Sturmian word. Thus, $w$ is a bispecial Sturmian word, and since $ywx\notin \St$, $w$ is a  non-strictly bispecial Sturmian word. By Theorems \ref{theor:sbsCP} and \ref{theor:main}, $xwy$ is a non-primitive Christoffel word.\qed
\end{proof}

\begin{corollary}\label{cor:formulamf}
 For every $n>1$, one has \[\MF(n)=2(n-1-\phi(n))\]
\end{corollary}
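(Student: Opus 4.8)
The plan is to count minimal forbidden words of length $n$ by using the characterization just established in Theorem~\ref{theor:mf}, which identifies $\MF$ with the set of words $ywx$ such that $xwy$ is a non-primitive Christoffel word. The key observation is that the map sending $ywx$ to $xwy$ is a bijection between $\MF(n)$ and the set of non-primitive Christoffel words of length $n$: it is clearly injective (one recovers $ywx$ by swapping the two boundary letters), and by Theorem~\ref{theor:mf} its image is exactly the non-primitive Christoffel words of length $n$. Hence the problem reduces to counting non-primitive Christoffel words of length $n$, which I would then compute directly.

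First I would recall from the earlier discussion (the remark following the definition of Christoffel words) that for each length $n>1$ there are exactly $n-1$ lower Christoffel words $w_{p,q}$, one for each pair $(p,q)$ with $p,q>0$ and $p+q=n$, and similarly $n-1$ upper Christoffel words. A Christoffel word $w_{p,q}$ is primitive if and only if $\gcd(p,q)=1$. So among the $n-1$ lower Christoffel words, the primitive ones correspond to pairs $(p,q)$ with $p+q=n$ and $\gcd(p,q)=1$; since $\gcd(p,q)=\gcd(p,n)$ when $p+q=n$, these are in bijection with the integers $p$ in the range $1\le p\le n-1$ that are coprime with $n$, of which there are exactly $\phi(n)$. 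Consequently the number of \emph{non}-primitive lower Christoffel words of length $n$ is $(n-1)-\phi(n)$.

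Next I would account for the upper Christoffel words. By Lemma~\ref{lem:rev}, each upper Christoffel word is the reversal of the corresponding lower Christoffel word, and reversal preserves primitivity; thus there are also exactly $(n-1)-\phi(n)$ non-primitive upper Christoffel words of length $n$. The only subtlety is to check that no word is double-counted, i.e.\ that a non-primitive Christoffel word is never simultaneously of lower and upper type: a lower Christoffel word begins with $a$ and ends with $b$, whereas an upper Christoffel word begins with $b$ and ends with $a$, so the two families are disjoint. Therefore the total number of non-primitive Christoffel words of length $n$ is $2\bigl((n-1)-\phi(n)\bigr)$, and via the bijection above this is exactly $\MF(n)=2(n-1-\phi(n))$.

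I do not expect any serious obstacle here; the argument is a routine counting once the bijection from Theorem~\ref{theor:mf} is in place. The one point that requires a small amount of care is the identity $\gcd(p,q)=\gcd(p,n)$ for $p+q=n$, which justifies that the count of coprime pairs equals $\phi(n)$ rather than something off by a boundary term, together with the disjointness of the lower and upper families so that the factor of $2$ is correct. The restriction $n>1$ in the statement simply ensures that Christoffel words of length $n$ are defined.
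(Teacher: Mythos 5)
Your proof is correct and follows the same route the paper intends: the paper states this corollary without proof as an immediate consequence of Theorem~\ref{theor:mf}, relying on exactly the count you spell out (of the $n-1$ lower and $n-1$ upper Christoffel words of length $n$, precisely $\phi(n)$ of each are primitive, and the letter-swap map is a bijection onto $\MF(n)$). Your explicit checks of injectivity, the identity $\gcd(p,q)=\gcd(p,n)$, and the disjointness of the lower and upper families are all sound.
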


It is known from \cite{Mig91} that $\St(n)=O(n^{3})$, as a consequence of the estimation (see~\cite{HaWr}, p. 268)
\begin{equation}\label{eq:HW}
 \sum_{i=1}^{n}\phi(i)=\frac{3n^{2}}{\pi^{2}}+O(n\log n)
\end{equation}
From (\ref{eq:HW}) and from the formula of Corollary \ref{cor:formulamf}, we have that $$\sum_{i=1}^{n}\MF(n)=O(n^{2})$$

\bibliographystyle{abbrv}
\bibliography{bisturmian}
\end{document}